\documentclass[11pt]{article}

\usepackage[letterpaper,top=1in,bottom=1in,left=1in,right=1in]{geometry}
\usepackage{hyperref}
\usepackage{booktabs}
\newcommand{\ra}[1]{\renewcommand{\arraystretch}{#1}}
\usepackage{graphicx} 
\usepackage{lineno}
\usepackage{enumerate}
\usepackage[shortlabels]{enumitem}
\usepackage{algorithm}
\usepackage{appendix}
\usepackage[noend]{algpseudocode}
\usepackage{todonotes}
\usepackage{hyperref}
\usepackage{amsmath, amsfonts}

\usepackage{amsthm}
\usepackage[capitalize]{cleveref}
\usepackage{lineno}
\usepackage{thm-restate}

\usepackage{xcolor}

\newtheorem{theorem}{Theorem}

\newtheorem{lemma}[theorem]{Lemma}
\newtheorem{definition}[theorem]{Definition}

\Crefname{@theorem}{Theorem}{Theorem}

\DeclareMathOperator{\height}{height}
\DeclareMathOperator{\rank}{rank}

\newcommand{\Sort}{\texttt{UniTopSort}}

  \usepackage{todonotes}
  \tikzset{notestyleraw/.append style={rectangle}}
\title{Sorting under Partial Information with Optimal Preprocessing Time\\ via Unified Bound Heaps}

\date{}

\begin{document}

\author{Daniel Rutschmann \thanks{Institute of Science and Technology Austria (ISTA), Klosterneuburg, Austria.\\\url{https://orcid.org/0009-0005-6838-2628} }}


\maketitle
\begin{abstract}
In 1972, Fredman proposes the problem of sorting under partial information:
preprocess a directed acyclic graph $G$ with vertex set $X$ so that
you can sort $X$ in $O(\log e(G))$ time, where $e(G)$ is the number of sorted orders compatible with $G$.
Cardinal, Fiorini, Joret, Jungers and Munro [STOC'10] show that you can preprocess $G$ in $O(n^{2.5})$ time and then sort $X$ in $O(\log e(G) + n)$ time and $O(\log e(G))$ comparisons.
Recent work of van der Hoog and Rutschmann [FOCS'24] implies an algorithm with
$O(n^{\omega})$ preprocessing time where $\omega < 2.372$ and $O(\log e(G))$ sorting time.
Haeupler, Hladík, Iacono, Rozhoň, Tarjan and Tětek [SODA'25] achieve an overall running time of $O(\log e(G) + m)$.
In this paper, we achieve tight bounds for this problem: $O(m)$ preprocessing time and $O(\log e(G))$ sorting time.

As a key ingredient, we design a new fast heap data structure that might be of independent theoretical interest.
\end{abstract}

\paragraph{Funding.}
This research was conducted partly at the IT University in Copenhagen, Copenhagen, Denmark and partly at ISTA.
Daniel Rutschmann was supported by the Carlsberg Foundation Young Researcher Fellowship CF21-0302 -- ``Graph Algorithms with Geometric Applications''; and by the European Research Council (ERC) under the European Union's Horizon 2020 research and innovation programme (No.\ 101019564). \raisebox{-0.2cm}{\includegraphics[width=1.6cm]{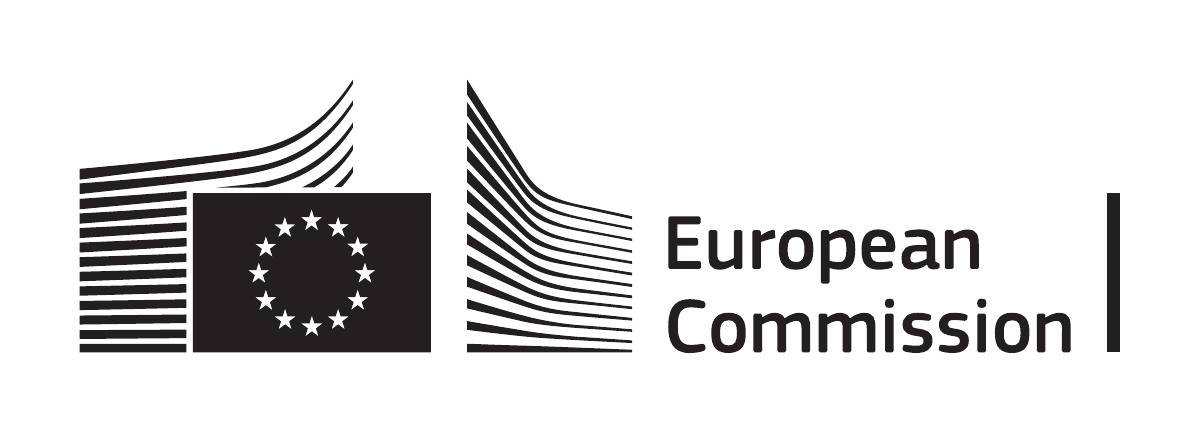}}

For open access purposes, the author has applied a CC BY public copyright license to this version of the manuscript.
Views and opinions expressed are those of the author(s) only and do not necessarily reflect those of the European Union or the European Research Council Executive Agency. Neither the European Union nor the granting authority can be held responsible for them.

\newpage

\section{Introduction}
Comparison based sorting is a fundamental problem in computer science.
You are given a list $X$ of $n$ items with some unknown linear order $L$.
You can obtain information about $L$ by comparing any pair of items,
and the goal is to output $X$ sorted along $L$.
Since there are $n!$ possible outputs, an information theoretic lower bound implies that
any algorithm needs to do $\Omega(n \log n)$ comparison in the worst case.
Many textbook sorting algorithm (heapsort, mergesort, deterministic quicksort) match this bound.

In 1976, Fredman~\cite{fredman_how_1976} introduces a generalization of sorting, called \emph{sorting under partial information}:
In addition to the list $X$ with unknown linear order $L$,
you are given a directed acyclic graph (DAG) $G$ with vertex set $X$,
and the promise that every edge $x \to y$ of $G$ satisfies $x <_L y$.
Given this generalized input, the sorting under partial information problem asks:
Can you use this DAG to do fewer comparisons?
Let $m$ be the number of edges in $G$ and let $e(G)$ be the number of linear orders compatible with $G$.
An information theoretic lower bound shows that any algorithm needs to do $\Omega(\log e(G))$ comparisons in the worst case.
Fredman asks whether this bound is tight.
That is, given $G$, can you construct a decision
tree of height $O(\log e(G))$ that sorts $X$?

This setting is very general: If $G$ contains a Hamilton path, then $e(G) = 1$, so the decision tree should perform no comparisons
after analyzing $G$.
If $G$ consists of two disjoint paths of length $n/2$, then $\log e(G) \in \Theta(n)$ and this is the merge problem.
If $G$ contains no edges, then $\log e(G) \in \Theta(n \log n)$ and we are back at the vanilla sorting problem.
If $G$ is an arbitrary DAG, then $O(\log e(G))$ can be any value from $O(1)$ to $O(n \log n)$.

\paragraph{Previous work.} 
In 1976, Fredman~\cite{fredman_how_1976} introduces the problem of sorting under partial information.
Given $G$, he shows that you can construct a decision tree of
height $O(\log e(G) + n)$ that sorts $X$. This exceeds the $\Omega(\log e(G))$ lower bound by an additive $O(n)$ term.
In 1984, Kahn and Kim~\cite{kahn_balancing_1984} show that you can construct a decision tree of height $O(\log e(G))$, matching the information theoretic lower bound.
When implemented as a word RAM algorithm, both of their approaches need exponential time
to construct the decision tree.
This decision tree can then be simulated in $O(\log e(G) + n)$ or $O(\log e(G))$ time, respectively.

Cardinal, Fiorini, Joret, Jungers and Munro~\cite{cardinal_sorting_2013}
study this problem algorithmically.
They observe that the algorithmic equivalent of constructing a decision tree is a two-stage process:
In the \emph{preprocessing phase}, the algorithm can access $G$ but cannot do any comparisons yet.
Afterwards, there is a \emph{sorting phase} where the algorithm can do comparisons.
At the end of the sorting phase, the algorithm has to output $X$ sorted along $L$.
When the decision tree arguments by Fredman~\cite{fredman_how_1976}
and Kahn and Kim~\cite{kahn_balancing_1984} are converted to algorithms that match this framework,
these algorithms need exponential time and space in the preprocessing phase,
but achieve a fast sorting phase.
Kahn and Saks 1992~\cite{kahn_entropy_1992} design the first polynomial time algorithm that sorts $X$ in $O(\log e(G))$ comparisons.
Cardinal, Fiorini, Joret, Jungers and Munro~\cite{cardinal_sorting_2013} show that you can preprocess $G$ in $O(n^{2.5})$ time,
after which you can sort $X$ in $O(\log e(G))$ comparisons and $O(\log e(G) + n)$ time.
The sorting phase of their algorithm can be seen as a decision tree of height $O(\log e(G))$
that can be implemented by a word RAM algorithm using $O(\log e(G) + n)$ time and space.

For all the above algorithms, it doesn't really matter how we represent $G$ in the input.
There are two reasonable choices:
For graph algorithms, we typically assume a representation in $O(m)$ space,
such as a list of $m$ edges, or an adjacency list.
But mathematically speaking, the information encoded by $G$ is a partial order on the set $X$,
where we can query for any pair $x, y \in X$ whether $x \prec y$,
similar to an in-memory adjacency matrix.
The main difference between these representations
is that partial orders are transitively closed:
if $u \prec v$ and $v \prec w$, then also $u \prec w$.
Directed acyclic graphs need not have this property: there might be edges $u \to v$ and $v \to w$, but no edge $u \to w$.
We can compute the transitive closure of a directed acyclic graph in $O(n^{\omega})$ time,
where $\omega < 2.372$ is the exponent of fast matrix multiplication.
Since $\omega < 2.5$, this additional step does not
increase the running time of the above algorithms.
But for the algorithms below, this does make a difference.

If $G$ is given as a partial order in memory,
Van der Hoog and Rutschmann~\cite{HoogRutschmann24} achieve $O(n^{1 + \varepsilon})$ preprocessing time and $O(\varepsilon^{-1} \log e(G))$ sorting time, which is tight for this input model.
If $G$ is given as a DAG, then we first compute the transitive closure,
and then run their algorithm.
In the two phase algorithmic framework, this implies
an algorithm with $O(n^{\omega})$ preprocessing time
that then sorts $X$ in $O(\log e(G))$ comparisons and time.
If we insist on $O(\log e(G))$ sorting time,
then this is the fastest known preprocessing time.

Haeupler, Hladík, Iacono, Rozhoň, Tarjan and Tětek~\cite{haeupler2024fast}
and van der Hoog, Rotenberg and Rutschmann~\cite{van_der_hoog_simpler_2025}
study the model where $G$ is given as a DAG.
The design an algorithm with $O(m)$ preprocessing time and $O(\log e(G) + m)$ sorting time.
This is tight in terms of the \emph{overall running time}.
However, compared to previous algorithms, this does not achieve a sorting time of $O(\log e(G))$.
In this paper, we close this gap.
We design an algorithm that preprocesses $G$ in $O(m)$ time
and then sorts $X$ in $O(\log e(G))$ comparisons and time,
which is tight. See \cref{tabl:upper_bounds}.

\begin{table*} \ra{1.2}
	\centering
    \begin{tabular}{@{}l@{}c@{}ll@{}cl@{}}
        \toprule
        Preprocessing $G$ & \phantom{abcd} & \multicolumn{2}{@{}l@{}}{Sorting $X$} & \phantom{abcd} & Source \\
        \cmidrule{1-1} \cmidrule{3-4}
        Time & & Comparisons & Time && \\
        \midrule
        {\color{red} exponential} && {\color{red} $\log e(G) + O(n)$} & {\color{red} $O(\log e(G) + n)$} && \cite[TCS'76]{fredman_how_1976}\\
        {\color{red} exponential} && $O(\log e(G))$ & $O(\log e(G))$ && \cite[Order'84]{kahn_balancing_1984}\\
        $O(m)$ && $O(\log e(G))$ & {\color{red} poly($n$)} && \cite[STOC'92]{kahn_entropy_1992}\\
        {\color{red} $O(n^{2.5})$} && $O(\log e(G))$ & {\color{red} $O(\log e(G) + n)$} && \cite[STOC'10]{cardinal_sorting_2013}\\
        {\color{red} $O(n^{\omega})$} && $O(\log e(G))$ & $O(\log e(G))$ && \cite[FOCS'24]{HoogRutschmann24}\\
        $O(m)$ && $O(\log e(G))$ & {\color{red} $O(\log e(G) + m)$} && \cite[SODA'25]{haeupler2024fast}\\
        $O(m)$ && $O(\log e(G))$ & {\color{red} $O(\log e(G) + m)$} && \cite[SOSA'25]{van_der_hoog_simpler_2025}\\
        \midrule
        $O(m)$ && $O(\log e(G))$ & $O(\log e(G))$ && This work\\
        \bottomrule
    \end{tabular}
\caption{\label{tabl:upper_bounds} Previous algorithms for sorting under partial information, and our contribution. Non-optimal running times in red.}
\end{table*}

\paragraph{Further related work.}
In the geometric setting,
algorithms with separate preprocessing and sorting phases have been extensively studied under the
\emph{preprocessing model of uncertainty}.
Van der Hoog, Kostityna, L\"{o}ffler and Speckmann~\cite{van_der_hoog_preprocessing_2019}
show that you can preprocess a set of intervals in $O(n \log n)$ time, and then, given an item from each interval,
sort these items in $O(\log e(G))$ time, where $G$ is the DAG induced by the intervals.
Their result could be seen as the geometric analogue of sorting under partial information.
They generalize this to two dimensions~\cite{van_der_hoog_preprocessing_2022}.
Held and Mitchell~\cite{held_triangulating_2008} show that you can preprocess a set of disjoint disks in $O(n \log n)$ time, and then,
given a point from each disk compute some triangulation in linear time.
In the same setting, you can also compute the Gabriel graph, Euclidean MST, Delaunay triangulation (Löffler and Snoeyink~\cite{loffler_delaunay_2010}, Devillers~\cite{devillers_delaunay_2011}), or quad tree (Löffler and Mulzer~\cite{loffler_triangulating_2012}) in linear time after preprocessing;
or a $k$-layer onion decomposition (Löffler and Mulzer~\cite{loffler_unions_2013}) in $O(n \log k)$ time after preprocessing.
There are further generalizations to overlapping regions~\cite{buchin_preprocessing_2011},
regions of arbitrary shape~\cite{van_kreveld_preprocessing_2010}, or points on lines~\cite{ezra_convex_2013}.

\paragraph{Adaptive heap data structures.}
A heap is a data structure that maintains a set of $n$ comparable items under two operations:
\texttt{push}$(x)$ adds a new item to the set,
and $x \gets$\texttt{pop}$()$ removes the smallest item $x$ from the set and returns this item.
Some heaps additionally support the operation \texttt{decreaseKey}$(x, y)$,
which replaces the item $x$ by an item $y$ with $y < x$.
We require that \texttt{push} runs in $O(1)$ (amortized) time and then seek the minimize the running time of \texttt{pop}. 
We say a heap is adaptive if \texttt{pop} runs in $o(\log n)$ time
for specific sequences of push and pop operations.
Adaptive data structures are useful since they allow us to perform $O(n)$ operations
while still achieving a sorting time of $o(n \log n)$.

Iacono~\cite{iacono_improved_2000} shows that the pairing heap
has the \emph{working set bound}: It supports $x \gets $\texttt{pop}$()$ in
$O(\log k)$ time where $k$ is the number of push operations performed while $x$ is in the heap.
By using this result, Haeupler, Hladík, Iacono, Rozhoň, Tarjan and Tětek~\cite{haeupler2024fast}
obtain their $O(\log e(G) + m)$ algorithm for sorting under partial information.
Recently, adaptive heap data structures have have lead to fast algorithms for other problems:
Haeupler, Hlad{\'\i}k, Rozho{\v{n}}, Tarjan and T{\v{e}}tek~\cite{Haeupler24}
design a heap data structure with the working set bound
that also supports the decrease-key operation in $O(1)$ amortized time.
They show that Dijkstra's algorithm with their heap is universally optimal.

Elmasry~\cite{elmasry_priority_2006} designs a heap
that supports $x \gets$\texttt{pop}$()$ in $O(\log k)$ time where $k$ is the number of items
that get pushed after $x$ and remain in the heap until $x$ gets popped.
Elmasry, Farzan and Iacono~\cite{elmasry_priority_2012} design a heap
that supports $x \gets$\texttt{pop}$()$ in $O(\min(\log k, \log \ell))$ time where $k$ is defined the same way and $\ell$ is the number of items
that get pushed \emph{before} $x$ and remain in the heap until $x$ gets popped.
Kozma and Saranurak~\cite{kozma_smooth_2018} introduce the \emph{smooth heap}, which is conjectured to be instance optimal,
but seems difficult to analyze.

\paragraph{Contribution.} We provide the first algorithm for sorting under partial information
that is tight in both preprocessing and sorting time.
To this end, we introduce the notion of the \emph{unified bound} for heaps.
Consider an initially empty heap under \texttt{push} and \texttt{pop} operations.
We allow some items to remain in the heap at the end of the execution.
Let $X$ be the set of items that get pushed into the heap.
For an item $x \in X$, let $a(x)$ denote the number of items that get pushed before $x$ gets pushed.
If $x$ does not get popped, put $b(x) = \infty$,
otherwise, let $b(x)$ denote the number of items that get popped before $x$ gets popped.
In other words, $x$ is the $a(x)$-th item to be pushed,
and the $b(x)$-th item to be popped.
\begin{definition} \label{def:unified}
    A heap data structure has the \emph{unified bound}
    if it supports \texttt{push}$(x)$ in amortized $O(1)$ time,
    the very first pop operation in amortized $O(1)$ time,
    and subsequent $x \gets$\texttt{pop}$()$ operations in amortized $O\Big(\min\limits_{y \in X:\ b(y) < b(x)} \big(1 + \log 
    |a(x) - a(y)| + \log (b(x) - b(y))\big)\Big)$ time.
\end{definition}
Intuitively, the unified bound allows us to use items $y$ that were already popped as a ``finger'' to
find the item $x$ that gets popped next. The cost of using a particular item $y$ contains two terms: $\log |a(x) - a(y)|$, and $\log (b(x) - b(y))$.
Roughly speaking, $\log |a(x) - b(x)|$ measure how ``close'' $x$ and $y$ are in the insertion order, while $\log (b(x) - b(y))$ measures how ``old'' the finger $y$ is.
Crucially, we do not need to specify $y$ ourselves,
the data structure automatically select an (asymptotically) optimal $y$.
Designing a heap with the unified bound is our main technical contribution:

\begin{theorem} \label{theo:heap_exists}
    There is a heap data structure with the unified bound.
\end{theorem}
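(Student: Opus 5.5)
The plan is to build the heap as a finger-search structure over the \emph{insertion order}, combined with a working-set-style structure over the \emph{pop order}. The two terms in \cref{def:unified} then come from two independent exponential decompositions. Conceptually, lay out all pushed items in a line sorted by $a(\cdot)$. The already-popped items cut this line into maximal \emph{gaps} of still-present items. Each gap is bounded by at most two popped items, and these act as fingers. Any candidate $x$ lies in a gap whose boundary fingers $y$ are the relevant minimizers. Indeed, a popped $y$ further away in insertion order than the nearest boundary finger of $x$'s gap is dominated in the $\log|a(x)-a(y)|$ term. It is not dominated in the $\log(b(x)-b(y))$ term, however, so this claim must be handled with care, as explained below.

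Inside a gap I would store the items in \emph{exponential groups} growing inward from both ends: sizes $1,1,2,2,4,4,\dots$ measured from the left boundary and from the right boundary, each group storing its minimum. An item at insertion distance $d$ from a boundary finger then lies in a group of level $O(\log d)$. Pushes append at the right end of the last gap. Those right-end groups behave like a binary counter, where two groups of size $2^i$ merge into one of size $2^{i+1}$ with one comparison, so \texttt{push} costs amortized $O(1)$. When $x$ is popped from a group of level $k$, the gap splits at $x$. The two new gaps need fresh group decompositions near their new ends. I would rebuild only the groups of level at most $k$ on each side, using at most $O(k)$ group-minimum comparisons per side, plus a potential argument to pay for later repairs. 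I would implement groups as heap-ordered trees so that splitting a group of size $2^k$ into exponentially sized pieces costs $O(k)$ comparisons, in the style of a binomial/finger tree decomposition.

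To locate the global minimum cheaply, I would organize the group minima by recency of their boundary finger, meaning by the pop time $b(y)$ that created that gap end. Following Iacono's working-set structure, I would keep a sequence of \emph{recency blocks} of sizes $2^{2^j}$, or of doubling sizes with a suitable potential. Each block is a heap (or just stores its minimum) over the group-minima of the gap ends whose finger falls in that recency range. A pop scans recency blocks and, within a block, the levels of group minima, in order of increasing $\log(b(x)-b(y)) + \log d$. The scan stops when the cost so far exceeds what any further candidate could save. This gives the combined cost $O(1+\log d+\log(b(x)-b(y)))$ for the true minimum $x$ and its best finger $y$. To handle the non-boundary fingers mentioned above, I would show that a far-away but very recent $y$ is simulated by the recent gap ends created between $y$ and $x$. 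The intuition is that if $y$ is recent, then every popped item between $y$ and $x$ in insertion order that lies closer to $x$ is at least as recent, or else the gap ending at the nearest boundary is itself reachable within the same budget. I expect a case analysis with an amortization charging $O(\log)$ to the intervening pops.

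The main obstacle is the interaction between the two decompositions under pops. Popping $x$ creates a new finger, which must be inserted at the front of the recency order, and it shifts every older finger back by one. It also splits a gap, so the group-minima of the affected groups move between recency blocks. I would first try to keep the invariant that groups are rebuilt lazily and that each recency block stores only a constant number of ``levels'' per finger. Each pop then demotes $O(1)$ amortized items per block, which yields amortized $O(\log(b(x)-b(y)))$ recency maintenance. The potential function would be the sum over groups of their level plus the sum over recency blocks of their overflow. The first pop costs $O(1)$ because, with no fingers yet, the single gap's right-end counter groups can be scanned using a stored prefix minimum. I would verify this special case separately.
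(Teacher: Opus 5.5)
\paragraph{Review of the proposal.}
Your plan has a genuine gap at exactly the point you flagged. The claim that any popped finger $y$ can be replaced by one of the two boundary fingers of $x$'s gap is false, and no amortization rescues it.

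Take the paper's own separating example: push $1, n+1, 2, n+2, \dots, n, 2n$ and then pop everything.
\begin{itemize}
\item After the first $n$ pops (insertion positions $1,3,\dots,2n-1$), every remaining item forms a singleton gap $\{2k\}$. When that item is popped, both of its boundary fingers (positions $2k\pm1$) were popped about $n$ pops earlier.
\item The unified bound pays $O(1)$ for popping position $2k$. It uses as finger the item at position $2k-2$, popped immediately before (insertion distance $2$, recency $1$). That finger is not a boundary of $x$'s gap, because the old popped item at position $2k-1$ separates them.
\item Both of your fallback claims fail here. The intervening gap end is not ``at least as recent''. The nearest boundary costs $\log 1+\log\Theta(n)$, which is outside the budget. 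There are no intervening pops to charge to.
\item In your structure, each such $x$ has its group minimum in a recency block of age about $n$, so each of these $n$ pops costs $\Theta(\log n)$.
\end{itemize}
The second phase therefore costs $\Theta(n\log n)$, while the unified bound for the whole sequence is $O(n)$. This is not a detail of your implementation. Any scheme that attaches recency only to gap ends has the same defect, because a popped item blocks the influence of fingers on its far side. Your scheme is essentially a working-set structure on gaps, and this is precisely the example that separates the working-set bound from the unified bound.

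\paragraph{The missing idea.}
Attach recency to positions in insertion order, not to gaps. The paper never removes popped items: they stay as leaves (with key $\infty$) of a complete binary tree over insertion order that stores subtree minima.
\begin{itemize}
\item Each accessed leaf $j$ forces its $\lceil\log(1+t-s(j))\rceil$-th ancestor to be passive (invariant 4).
\item Consecutive leaders may differ in height by at most $2$ (invariant 3). This propagates the smallness outward with logarithmic slope, regardless of what was popped in between.
\item Together these give leader height at most $d(i)+3$, where $d(i)$ minimizes over all leaves $j$, not just gap boundaries.
\end{itemize}
The real remaining difficulty is keeping subtree minima over the leaders cheap as leaders are created and destroyed. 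The paper handles it with $O(\log\log n)$ index trees of doubly exponential sizes, bulk insertion of the contiguous, height-monotone set of newly created leaders, and promotion of whole unpromoted blocks, all paid for by coins and bills. Your recency-block maintenance would need an argument of comparable care, but the boundary-finger issue has to be fixed first.
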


We believe that this data structure and our notion of heaps with the unified bound
are of independent theoretical interest. For example:
\begin{lemma} \label{lemm:unified_working_set}
    If a heap has the unified bound,
    the it also has the working set property.
\end{lemma}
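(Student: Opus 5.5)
Fix a pop operation $x \gets \texttt{pop}()$ that is not the first pop. Let $k$ be the number of push operations performed while $x$ is in the heap, i.e.\ after $x$ is pushed and before $x$ is popped. The working set property asks for a cost of $O(1+\log k)$ for this pop. By \cref{def:unified}, it suffices to exhibit one item $y$ with $b(y) < b(x)$ such that both $|a(x)-a(y)|$ and $b(x)-b(y)$ are $O(k+1)$. The minimum in the unified bound is then at most $O(1+\log(k+1))$. The first pop costs $O(1)$ by definition, so it needs no argument.

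\textbf{Choice of finger.} I take $y$ to be the item popped immediately before $x$, so that $b(y) = b(x)-1$. The second term is then $\log 1 = 0$. It remains to bound $|a(x)-a(y)|$, and I split into two cases.

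\emph{Case $a(y) > a(x)$.} Then $y$ was pushed after $x$ and popped before $x$, so its push occurred while $x$ was in the heap. Every push between those of $x$ and $y$ also occurred while $x$ was in the heap. Hence $a(y)-a(x) \le k$.

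\emph{Case $a(y) < a(x)$.} Here the distance in insertion order could be huge, since $y$ may be an old item. So the immediate predecessor is not enough in general, and I refine the choice of $y$:
\begin{itemize}
\item If some item was popped while $x$ was in the heap, let $y$ be the most recently popped such item, which is the predecessor itself. Then the distance in the push order can still be large, so this alone does not settle the case.
\item Instead, take $y$ to be the item pushed immediately before $x$, provided it was already popped. If it has not been popped, fall back as follows.
\end{itemize}

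\textbf{Cleaner approach.} Consider the items pushed strictly between the last pop before $x$'s push and the push of $x$, together with the pops while $x$ is present. Take $y$ to be the last item popped before $x$. Let $t$ be the time of the pop of $y$, and let $p$ be the time of the push of $x$. If $t > p$, then $y$ was in the heap at time $p$. Since $y$ is popped before $x$, the heap order gives $y < x$.

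The hard step is precisely this: the immediate predecessor can be far away in insertion order, while the recent pushes near $x$ might not have been popped. I would try a charging argument. The number of pops during $x$'s lifetime is at most the number of items present at $x$'s push plus $k$. That quantity is unbounded, though, so the key idea must be to choose $y = $ the last item popped before $x$ among those with $a(y) > a(x)$, if any exist; this gives $|a(x)-a(y)| \le k$. We also need $b(x)-b(y) \le k+1$. This holds because every pop after $y$ and before $x$ removes an item pushed before $x$, and each such item $z$ satisfies $z < x$.

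If no such $y$ exists, every item popped during $x$'s lifetime was pushed before $x$. In that case I would choose $y$ as the item with largest $a(y) < a(x)$ among the popped items and bound the gap by amortization against the working-set potential. I expect that this last case, where the pops during $x$'s lifetime all remove older items, is the main obstacle, and that it needs an amortized (summed) argument rather than a per-operation one.
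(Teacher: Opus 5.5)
Your proposal has a genuine gap. The reduction in your first paragraph is to exhibit, for each pop, a single finger $y$ with $|a(x)-a(y)|$ and $b(x)-b(y)$ both $O(k+1)$. That is a per-operation claim, and it is false in general, so no choice of finger can make it work. Counterexample: push $y$ with value $1$ and pop it, then push $z_1,\dots,z_n$ with huge values, then push $x$ with value $2$ and pop it. The only item popped before $x$ is $y$, and $|a(x)-a(y)|=n+1$. So the unified bound charges $\Theta(\log n)$ for this pop even though $k=O(1)$, and only the $n$ cheap pushes can pay for it. The lemma therefore holds only in summed (amortized) form, which you suspect at the end but never supply.

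Your intermediate claim that $b(x)-b(y)\le k+1$, for $y$ the last-popped item with $a(y)>a(x)$, is also wrong. Push $z_1,\dots,z_n$ with values $2,\dots,n+1$, then $x$ with value $n+2$, then $y$ with value $1$, and pop everything. Here $k=1$ but $b(x)-b(y)=n+1$, because nothing bounds the number of older, smaller items popped in between.

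The missing idea is that your first choice, $y=x_{i-1}$ (the item popped immediately before $x_i$), already suffices once you sum over pops. Let $t_i$ be the number of pushes performed before the $i$-th pop. Since $a(x_i)<t_i$ and $a(x_{i-1})<t_{i-1}\le t_i$,
\[
|a(x_i)-a(x_{i-1})| \le (t_i-a(x_i)) + (t_{i-1}-a(x_{i-1})) + (t_i-t_{i-1}).
\]
So in the case you found troublesome (an old $y$), the insertion-order gap is covered by the pushes during $y$'s own lifetime plus the pushes between the two pops. Using $\log(1+\alpha+\beta+\gamma)\le\log(1+\alpha)+\log(1+\beta)+\log(1+\gamma)$ and $\log(1+\gamma)\le\gamma$, the unified cost of the $i$-th pop is at most
\[
1+\log(1+t_i-a(x_i))+\log(1+t_{i-1}-a(x_{i-1}))+(t_i-t_{i-1}).
\]
The middle term is the previous pop's working-set budget, so each budget is used at most twice. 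The last term telescopes to at most $t$ over all pops and is absorbed by the $O(1)$ amortized cost of the pushes. Hence $t$ pushes and $q$ pops cost $O\bigl(t+q+\sum_i\log(1+t_i-a(x_i))\bigr)$ in total, which is the working-set bound in amortized form. This is the paper's proof.
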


Next, we use this heap to design a fast algorithm for sorting under partial information.
Let \Sort{} be the following two-phase algorithm:
In the preprocessing phase, \Sort{} computes a topological ordering of $G$.
This is the only preprocessing step, and takes $O(n + m)$ time.
In the sorting phase, \Sort{} considers all items of $X$ in topological order
and pushes them into a heap with the unified bound.
Then, it repeatedly pops from the heap. This yields $X$ in sorted order.
Note that, to perform the \texttt{push} and \texttt{pop} operations,
the heap has to compare items. But this is fine, since we have access to the linear order $L$ in the sorting phase.
\emph{A priori}, it seems unlikely that \Sort{} achieves $O(\log e(G) + n)$ sorting time.
Indeed, the only information about the graph $G$ that \Sort{} retains is a topological ordering.
Surprisingly, we show:
\begin{restatable}{theorem}{thmSorting} \label{theo:sorting_raw}
    \Sort{} has $O(n+m)$ preprocessing time and $O(\log e(G) + n)$ sorting time.
\end{restatable}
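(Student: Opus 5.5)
The plan is to charge each \texttt{pop} to a well-chosen finger and then bound the total charge by a counting argument over linear extensions of $G$.

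\textbf{Setup.} Preprocessing computes a topological order $\tau$ of $G$ in $O(n+m)$ time. In the sorting phase, all $n$ pushes happen before any pop, so:
\begin{itemize}
\item $a(x)=\tau(x)$;
\item $b(x)$ is the rank of $x$ in $L$;
\item pushes and the first pop cost $O(n)$ by \cref{def:unified}.
\end{itemize}
Let $\pi_1,\dots,\pi_n$ be $X$ in $L$-order. Since \cref{def:unified} takes a minimum over all earlier-popped $y$, I may choose any popped item as the finger for $\pi_i$. Two choices are natural:
\begin{itemize}
\item $y=\pi_{i-1}$, which gives cost $1+\log|\tau(\pi_i)-\tau(\pi_{i-1})|$;
\item $y$ the already-popped item closest to $\pi_i$ in $\tau$, which trades the distance term against the age term $\log(i-b(y))$.
\end{itemize}
It then suffices to show the combinatorial inequality
\[
\sum_{i=2}^{n} \min_{j<i}\Big(\log\big(1+|\tau(\pi_i)-\tau(\pi_j)|\big)+\log(i-j)\Big) \;=\; O(\log e(G)+n)
\]
for every pair of linear extensions $\tau$ and $L$ of $G$. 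I would first try to prove it with the simple choice $j=i-1$, i.e.\ bound $\sum_i \log(1+d_i)$ where $d_i=|\tau(\pi_i)-\tau(\pi_{i-1})|$. I would fall back to the general finger only if that choice fails on examples such as two interleaved chains.

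\textbf{Key observation.} Fix $i$ and let $z$ be an item strictly between $\pi_{i-1}$ and $\pi_i$ in $\tau$-order with $b(z)>i$. Then $z$ cannot be a $G$-predecessor of $\pi_i$, since that would force $z<_L\pi_i$. Symmetrically, $z$ cannot be a $G$-successor of $\pi_{i-1}$ in the backward-jump case. So a large gap $d_i$ certifies many items that are ``free'' relative to $\pi_i$ at time $i$.

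\textbf{Counting argument.} I would turn the observation into a lower bound on $e(G)$ by an encoding argument. Build linear extensions greedily: at step $i$ choose any currently minimal element of the remaining poset. The number of extensions is at least the product over $i$ of the number of admissible choices, up to the $2^{O(n)}$ slack we can afford. I would then show that at step $i$ there are roughly $\Omega(d_i)$ admissible choices, or at least $\Omega(d_i^{c})$, after amortizing over steps. The amortization would charge the gap $d_i$ to the pending items lying in the $\tau$-interval between the two fingers. This requires care to avoid double counting the same pending item across many steps, and that is exactly where the age term $\log(i-j)$ from the unified bound should help: an item that sits pending for a long time is cheap to use as a finger later.

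\textbf{Main obstacle.} I expect the hardest step to be this amortized counting of admissible choices. Free items in a $\tau$-interval need not be simultaneously minimal, so a single gap does not directly give $d_i$ choices. As a fallback I would compare against the known structure theorem of Cardinal et al.~\cite{cardinal_sorting_2013}: with a greedy chain decomposition of sizes $c_1,\dots,c_k$, one has $\sum_j c_j\log(n/c_j)=O(\log e(G)+n)$. I would then bound the finger costs chain by chain: consecutive pops from the same chain are close in $L$-rank, and the number of intervening pops from other chains is controlled by the chain sizes, which bounds both the age and distance terms. This reduces the problem to a merging-type entropy bound, which the unified bound is designed to match.
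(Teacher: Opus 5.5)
\textbf{There is a gap in the route you actually develop.} Its foundation is the claim that $e(G)$ is at least the product, along the single extension $L$, of the number of currently minimal elements at each step, up to a $2^{O(n)}$ factor. That claim is false: $e(G)$ counts all leaves of the choice tree, and one branch can be far bushier than the others. Take a chain $x_1<\dots<x_N$ plus $\sqrt N$ isolated items, and let $L$ list the chain first. During the first $N$ steps there are $\sqrt N+1$ minimal elements each time, so your product is at least $2^{\frac12 N\log N}$. Yet $\log e(G)=\log\binom{N+\sqrt N}{\sqrt N}+\log(\sqrt N)!=O(\sqrt N\log N)$. The step the route rests on (about $d_i$ admissible choices per step, suitably amortized) is also not established; as you note, free items need not be simultaneously minimal. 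With the finger $j=i-1$ the target inequality is simply false: two chains listed one after the other by $\tau$ but interleaved by $L$ give cost $\Theta(n\log n)$ against $\log e(G)=\Theta(n)$. So the key observation plus greedy encoding does not yield the theorem.

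\textbf{Your fallback can be made into a proof, but not for the reason you give.} Consecutive elements of a chain need not be close in $L$-rank or in $\tau$. What works is that $\tau$ and $L$ are both linear extensions, so along each chain $C_j$ both $a$ and $b$ strictly increase. Use the chain predecessor as the finger for each non-first element of $C_j$. Then the $a$-gaps and the $b$-gaps along $C_j$ each telescope to less than $n$, so by concavity the chain costs $O(c_j\log(en/c_j))$. Each of the $k$ chain heads costs $O(1+\log n)$ with the previously popped item as finger. This is absorbed, since every chain with $c_j\le\sqrt n$ has $c_j\log(n/c_j)\ge\frac12\log n$, and there are at most $\sqrt n$ longer chains. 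Taking the Cardinal et al.\ bound $\sum_j c_j\log(n/c_j)=O(\log e(G)+n)$ for greedy chain decompositions as a black box, this gives $O(n+\log e(G))$.

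\textbf{The paper avoids that black box and any chain decomposition.} As finger for $x$ it uses the item $y$ with $a(y)<a(x)$ and $b(y)<b(x)$ that maximizes $a(y)+b(y)$. This bounds the pop cost by $O(1+\log(r(x)-\ell(x)))$, where $r(x)=a(x)+b(x)$ and $\ell(x)$ is that maximum. Every edge $y\to x$ of $G$ forces exactly this dominance, hence $r(y)\le\ell(x)$. So any choice of distinct points in the intervals $[\ell(x),r(x)]$ realizes a linear extension of $G$. The lopsided interval lemma with $n$ intervals in $[0,2n]$ then bounds $\sum_x\log(r(x)-\ell(x))$ by $O(n+\log e(G))$.
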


From \cref{theo:sorting_raw}, we apply existing reductions to reduce the preprocessing time to $O(m)$ and the sorting time to $O(\log e(G))$. The sorting time bounds the number of comparisons. This yields:

\begin{theorem} \label{theo:sorting_fast}
    We can preprocess a DAG $G$ on $X$ in $O(m)$ time, so that,
    given an unknown linear order $L$ on $X$
    and the guarantee that all edges in $G$ go from smaller items to larger items with respect to $L$,
    we can sort $X$ in $O(\log e(G))$ time and comparisons.
\end{theorem}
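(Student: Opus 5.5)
The plan is to derive \cref{theo:sorting_fast} from \cref{theo:sorting_raw} in two reductions: first remove the additive $O(n)$ from the sorting time, then shrink the $O(n+m)$ preprocessing to $O(m)$.

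\textbf{Removing the additive $n$.} Precompute a longest path (or a greedy chain decomposition) of $G$ in $O(n+m)$ time. The plan rests on a standard structural fact used in the entropy-based literature (\cite{cardinal_sorting_2013}, \cite{HoogRutschmann24}): if $P$ is a longest chain of $G$, then $\log e(G) \in \Omega(n - |P|)$. The intuition is that the elements off a maximum chain contribute linearly many bits. More precisely, via Mirsky's theorem $G$ splits into $|P|$ antichain levels, and $e(G) \ge \prod_i |A_i|!$. This product is large unless almost all levels are singletons, which gives $\log e(G) \ge c(n-|P|)$ up to a constant.

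The sorting phase then has two parts:
\begin{enumerate}[(i)]
  \item Run \Sort{} on $X \setminus P$ only. By \cref{theo:sorting_raw} applied to the induced DAG $G'$, this costs $O(\log e(G') + n - |P|)$. Since every extension of $G$ restricts to an extension of $G'$, we have $e(G') \le e(G)$.
  \item Merge the sorted result into the already-sorted chain $P$. Merging $k=n-|P|$ items into a chain of length $|P|$ by exponential (finger) search costs $O(k\log(|P|/k+1))+O(k)$. This is bounded by $O(\log e(G)+k)$ because $e(G)\ge$ the number of interleavings consistent with the levels, i.e.\ $\log\binom{|P|+k}{k}$-type bounds. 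Here I would again use the level structure, which is the delicate point.
\end{enumerate}
If this merging bound fails, the fallback is to rely only on the first fact and merge with a cost that the levels control. In total the sorting time is $O(\log e(G) + (n-|P|)) = O(\log e(G))$. When $|P|=n$ the chain is already sorted and we output it with no comparisons; outputting takes $O(n)$ time. Strictly, ``time $O(\log e(G))$'' here presumes the output is a pointer to a precomputed list, following the convention of \cite{HoogRutschmann24}, so I would adopt that convention.

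\textbf{Reducing preprocessing to $O(m)$.} Isolated vertices (those with no edges) form an antichain $I$. Handle them without listing them explicitly in preprocessing, so that preprocessing touches only the $O(m)$ non-isolated vertices. If $|I|=k$, then $\log e(G)\ge \log k! \in \Omega(k\log k)$, which pays for sorting $I$ by mergesort. Merging $I$ into the sorted rest also fits in $O(\log e(G))$ via the same interleaving count as above.

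The main obstacle I expect is rigorously proving $\log e(G)=\Omega(n-|P|)$ together with the merge cost bound. For this I would cite the corresponding lemma from the cited reductions (Cardinal et al.\ prove $\log e(G)\ge$ a constant times $n$ minus the longest chain via graph entropy) rather than reprove it.
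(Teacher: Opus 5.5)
\textbf{There is a genuine gap.} Both halves of your step that removes the additive $n$ fail, and one example shows both. Take a chain $q_1\to\cdots\to q_N$ with $N=k^2+2$, and add $x_1,\dots,x_k$ with edges $q_{jk}\to x_j\to q_{jk+2}$. The $q$-chain is a longest path $P$, so $n-|P|=k$. Each $x_j$ has exactly two slots, so $\log e(G)=k$.

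In step (i), the inequality $e(G')\le e(G)$ for $G'=G[X\setminus P]$ is false. Knowing that every extension of $G$ restricts to one of $G'$ only says the restriction map is well defined. The inequality needs every extension of $G'$ to arise this way, and relations that pass through $P$ are lost in $G'$. Here $G'$ is edgeless, so $e(G')=k!$ and \cref{theo:sorting_raw} only gives $O(k\log k)$. Since the topological order of $G'$ carries no information about the forced order $x_1<\dots<x_k$, this cost can actually be incurred.

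In step (ii), an exponential-search merge that ignores the edges between $P$ and $X\setminus P$ pays $\Theta(\log k)$ per $x_j$, because consecutive $x_j$ lie about $k$ positions apart on $P$. That is $\Theta(k\log k)$ in total. The interleaving bound $\log e(G)\in\Omega(\log\binom{|P|+k}{k})$ you invoke is false precisely because those edges pin each $x_j$ down, and your fallback does not say how to use them.

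The paper avoids the first problem by running \Sort{} on a set $Y$ instead of $X\setminus P$. $Y$ is $X\setminus P$ together with each off-path vertex's last in-neighbour and first out-neighbour on $P$, and $H$ is the induced graph plus path edges along $Y\cap P$. Every linear extension of $H$ extends to one of $G$ (insert each vertex of $P\setminus Y$ immediately after its predecessor on $P$), so $e(H)\le e(G)$, while $|Y|\le 3(n-|P|)$. In the example, $H$ is a chain. (Alternatively, push $X\setminus P$ in the order inherited from a topological order of $G$, and rerun \cref{lemm:realization} with the transitive closure of $G$ restricted to $X\setminus P$. This works because linear extensions of an induced subposet always extend.)

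The second problem is not solved from scratch in the paper. Merging the sorted $X\setminus P$ into $P$ in $O(\log e(G))$ time, guided by preprocessing, is exactly the second half of \cref{lemm:long_path}, imported from prior work. You should cite it in place of the unguided merge.

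Two parts of your proposal are fine. Your Mirsky argument for $n-|P|\le\log e(G)$ works (use $a!\ge 2^{a-1}$ on each level). Your isolated-vertex handling of the $O(m)$ preprocessing is a more delicate version of the paper's test: if $m<n/3$, at least $n/3$ vertices are isolated, so $\log e(G)\in\Theta(n\log n)$ and one simply mergesorts; otherwise $O(n+m)=O(m)$.
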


\paragraph{Comparison to existing algorithms for DAG sorting.}
Both \Sort{} and the algorithm of Haeupler, Hladík, Iacono, Rozhoň, Tarjan and Tětek~\cite{haeupler2024fast}
are variants of Heapsort that use an adaptive heap data structure guided by a topological ordering of $G$.
Thus, at a surface level, these algorithms might look similar.
However, the two algorithms operate very differently. Let us illustrate three key differences:

(1) The algorithm of \cite{haeupler2024fast} tries to keep the heap as small as possible,
by inserting items on the fly once they become relevant (i.e. become sources in $G$).
Thus, the algorithm interleaves push and pop operations.
\Sort{} pushes all $n$ items into the heap immediately.

(2) The algorithm of \cite{haeupler2024fast} is closely tied to the graph $G$:
Whenever an item is popped from the heap, the algorithm looks at all outgoing edges from that item.
This yields the $O(m)$ term in the sorting time.
\Sort{} does not care about the individual edges of $G$ beyond using them to find a topological ordering of $G$.

(3) The heap used by \cite{haeupler2024fast} is much simpler than the one we use:
You can obtain a heap with the working set via a simple bucket-based construction
with heaps of exponentially increasing sizes~\cite{van_der_hoog_simpler_2025-1}.
Our construction of a heap with the unified bound is involved.
We have reasons to believe than an involved construction is necessary, as the
cost function in the unified bound contains two separate terms ($\log |a(x) - a(y)|$ and $\log (b(x) - b(y))$. We discuss prior attempts at this in \cref{sect:known_heaps}.

The other existing algorithms for DAG sorting do not use heaps:
\cite{van_der_hoog_simpler_2025} use a variant of insertion sort with a finger search tree,
where they look at all outgoing edges of an item to determine the optimal finger.
\cite{cardinal_sorting_2013} and \cite{HoogRutschmann24} decompose the graph into paths
and use a variant of merge sort based on Huffman trees.

\section{Preliminaries}
We denote $[n] = \{1, 2, \dots, n\}$ and we use $\log$ for the base-2 logarithm.
Let $X$ be the set of items and let $G$ be a DAG with vertex set $X$.
We say a linear order $L$ on $X$ is \emph{compatible with} $G$ if all edges in $G$ go
from a smaller item to a larger item under $L$, that is, $x <_L y$ for every edge $x \to y$.

\subsection{Existing Machinery for Sorting under Partial Information}

The interval lemma is a powerful tool for analysing fast algorithms that use adaptive data structures.
It was originally introduced in~\cite{cardinal_sorting_2013} and has been extensively used~\cite{cardinal_sorting_2013,Haeupler24,HoogRutschmann24,haeupler2024fast,van_der_hoog_simpler_2025}.
Here, we consider a lopsided version where the number of intervals may be different from
the range of the intervals.

\begin{restatable}[A generalization of Lemma 3.2 in~\cite{cardinal_sorting_2013}]{lemma}{intervallemma} \label{lemm:interval}
    Let $\{ [a_i, b_i ] \}_{i=1}^{k}$ be a family of $k$ subintervals of $[0, m]$
    that each have size at least $1$.
    Let $Z$ be the set of linear orders realizable by real numbers $z_i \in [a_i, b_i]$.
    Then $ \sum\limits_{i \in [n]} \log( b_i - a_i) \in O(k \log (\frac{em}{k}) + \log |Z|)$.
\end{restatable}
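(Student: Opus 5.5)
The plan is a volume (measure) argument. It gives the bound with explicit constants, essentially $\sum_{i=1}^{k} \log(b_i - a_i) \le \log|Z| + k\log\frac{em}{k}$. Throughout, I read the sum in the statement as ranging over $i \in [k]$.

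Consider the box $P = \prod_{i=1}^{k} [a_i, b_i] \subseteq [0,m]^k$. Its $k$-dimensional Lebesgue volume is exactly $\prod_{i=1}^{k} (b_i - a_i)$, so the left-hand side of the lemma is $\log \operatorname{vol}(P)$. The goal is therefore to upper bound $\operatorname{vol}(P)$ by $|Z| \cdot m^k / k!$.

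First, discard the set of points $z \in P$ with two equal coordinates. It is a finite union of hyperplanes and has measure zero. Every remaining point $z$ has pairwise distinct coordinates, so it induces a linear order $\pi_z$ on the indices $[k]$, namely the order of $z_1, \dots, z_k$ as real numbers. By definition of $Z$, every such $\pi_z$ lies in $Z$. Hence, up to a null set,
\[
P \subseteq \bigcup_{\pi \in Z} C_\pi, \qquad C_\pi = \{ z \in [0,m]^k : z_{\pi(1)} < z_{\pi(2)} < \dots < z_{\pi(k)} \}.
\]
Each $C_\pi$ is one of the $k!$ congruent simplices that partition the cube $[0,m]^k$ up to measure zero, so $\operatorname{vol}(C_\pi) = m^k/k!$. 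A union bound then gives $\operatorname{vol}(P) \le |Z| \, m^k / k!$.

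Taking logarithms yields
\[
\sum_i \log(b_i - a_i) \le \log|Z| + k\log m - \log k!.
\]
Using the standard estimate $k! \ge (k/e)^k$, we have $k \log m - \log k! \le k \log\frac{em}{k}$, which gives the claimed bound.

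I do not expect a real obstacle. The only points needing care are the null set of tied coordinates and the exact interpretation of $Z$, which I take as the set of orders of index sets realized by distinct reals. The hypothesis that each interval has length at least $1$ is not needed for the upper bound itself. It only ensures that every term $\log(b_i - a_i)$ is nonnegative, so the sum is meaningful as a cost.
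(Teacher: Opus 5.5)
Your proof is correct and is essentially the paper's argument in different language: the paper draws each $z_i$ uniformly from $[0,m]$ and compares $\Pr[z_i \in [a_i,b_i] \text{ for all } i] = \prod_{i}(b_i-a_i)/m^k$ with the probability $|Z|/k!$ that the induced uniform permutation lies in $Z$, which is your bound $\mathrm{vol}(P) \le |Z|\, m^k/k!$ divided by $m^k$. The final estimate $\log k! \ge k\log k - k\log e$ is also the same.
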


\begin{proof}
    Many existing proofs of Lemma 3.2 in~\cite{cardinal_sorting_2013} can be adapted to work in the lopsided setting.
    For example, we tweak the proof of \cite{haeupler2024fast}:
    For each $i$ between $1$ and $k$ inclusive, choose a real number $z_i$ uniformly at random from the real interval $[0, m]$, independently for each $i$. With probability $1$, the $r_i$ are distinct. 
    Let $L$ be the permutation of $[k]$ obtained by sorting $[k]$ by $z_i$. Each possible permutation is equally likely.
    If each $z_i \in [a_i, b_i]$ then $L$ is in $Z$.
    The probability of this happening is $\prod_{i=1}^k (b_i-a_i)/m$.  It follows that $|Z| \geq k! \cdot \prod_{i=1}^k (b_i-a_i)/m$.  Taking logarithms gives $\log |Z| \geq \sum_{i=1}^k \log (b_i-a_i) + \log k! - k\log m$.  By Stirling's approximation of the factorial, $\log k! \geq k\log k - k\log e$. Thus,

    \vspace{-1em}
    \[
    \sum_{i=1}^k \log (b_i-a_i) \le \log |Z| + k \log m - \log k! \le \log |Z| + k \log \Big(\frac{e m}{k}\Big). \qedhere
    \]
\end{proof}

In the introduction, we present two algorithms for sorting under partial information with different running times:
\cref{theo:sorting_raw} and \cref{theo:sorting_fast}.
The main difference between these
is the additional $O(n)$ term in the sorting time of \cref{theo:sorting_raw}.
This term is only relevant if $\log e(G) \in o(n)$,
and there is existing machinery \cite{cardinal_sorting_2013,HoogRutschmann24,van_der_hoog_simpler_2025} for dealing with this case:
Roughly speaking, this can only happen if there is a very long path in $G$.
We can find this path during the preprocessing phase,
and not waste any comparisons on it during the sorting phase. Formally:

\begin{lemma}[Lemma 1 in \cite{van_der_hoog_simpler_2025} + Lemma 8 of~\cite{HoogRutschmann24}] \label{lemm:long_path}
    Let $P$ be a longest path in $G$. Then, $n - |P| \in O(\log e(G))$.
    Moreover, after performing $O(n+m)$ preprocessing on $G$, we can,
    given the sorted order of $X - P$, sort all of $X$ in $O(\log e(G))$ time during the sorting phase.
\end{lemma}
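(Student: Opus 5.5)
The lemma has two parts, and I would prove them separately.

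\emph{The bound $n-|P| \in O(\log e(G))$.} I would use the longest-path layering. For $x \in X$ let $h(x)$ be the number of vertices on a longest path of $G$ ending at $x$, and let $L_i = \{x : h(x) = i\}$ for $i = 1, \dots, |P|$. Every edge $x \to y$ satisfies $h(x) < h(y)$, so each $L_i$ is an antichain and the number of layers is exactly $|P|$. Hence any order that lists $L_1, L_2, \dots$ in turn, each layer ordered arbitrarily, is compatible with $G$. This gives $e(G) \ge \prod_i |L_i|!$. Using $\log k! \ge k-1$ for every $k \ge 1$, we get $\log e(G) \ge \sum_i (|L_i|-1) = n - |P|$.

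\emph{Preprocessing for the merge.} Write $P = p_1 \to \dots \to p_\ell$. For each $x \notin P$ I would compute two indices:
\begin{itemize}
\item $\mathrm{lo}(x)$, the largest $j$ such that $p_j$ reaches $x$ (or $0$ if there is none);
\item $\mathrm{hi}(x)$, the smallest $j$ such that $x$ reaches $p_j$ (or $\ell+1$ if there is none).
\end{itemize}
Both come from one dynamic program over a topological order: take a maximum over predecessors for $\mathrm{lo}$, and a minimum over successors in reverse order for $\mathrm{hi}$. Together with finding $P$, this costs $O(n+m)$. In every compatible $L$, the item $x$ lies strictly between $p_{\mathrm{lo}(x)}$ and $p_{\mathrm{hi}(x)}$.

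\emph{Sorting phase.} We are given $X - P$ sorted. I would insert its items into $P$ one by one in increasing order. The search for $x$ starts at the position of the previously inserted item, raised to at least $\mathrm{lo}(x)$, and never goes past $\mathrm{hi}(x)$. The search is exponential plus binary, so it costs $O\big(1 + \log(\mathrm{hi}(x)-\mathrm{lo}(x))\big)$. Summed over the $k = n-|P|$ items, the $O(k)$ part is $O(\log e(G))$ by the first part.

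\emph{Bounding the logarithmic terms.} It remains to show $\sum_x \log(\mathrm{hi}(x)-\mathrm{lo}(x)) \in O(\log e(G))$. I would apply \cref{lemm:interval} to the $k$ intervals $[\mathrm{lo}(x), \mathrm{hi}(x)] \subseteq [0, \ell+1]$, each of size at least $1$. This gives a bound of $O\big(k\log\frac{e(\ell+1)}{k} + \log|Z|\big)$, which leaves two obstacles.

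\emph{The $\log|Z|$ term (main obstacle).} We need $|Z| \le e(G)$, but a realization $z_x$ need not respect edges between two off-path items. I would fix this by perturbing, replacing $z_x$ with $z_x + \varepsilon\, h(x)$ for a small $\varepsilon > 0$. One then has to check that the realizable orders still inject into linear extensions once interleaved with $P$. The ingredients are that $\mathrm{lo}$ and $\mathrm{hi}$ are monotone along edges, and that the given sorted order of $X-P$ pins down ties. If the injection fails, the fallback is to cite the corresponding counting argument from the referenced works directly.

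\emph{The $k\log(e\ell/k)$ term.} This term is not automatically $O(\log e(G))$ when $\ell \gg k$. I would instead bound the cost by the merge-style finger search. The cost for consecutive items is then $\log$ of the gap between their positions in $P$, capped by the interval length. I would then show that the number of valid interleavings alone already accounts for these gaps.
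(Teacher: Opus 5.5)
The paper does not prove this lemma; it cites Lemma 1 of \cite{van_der_hoog_simpler_2025} and Lemma 8 of \cite{HoogRutschmann24}. So I judge your argument on its own.

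Your first part is correct and clean: the longest-path layering gives $e(G) \ge \prod_i |L_i|!$, hence $\log e(G) \ge n-|P|$. Your preprocessing of $\mathrm{lo}$ and $\mathrm{hi}$ and your search procedure are also fine.

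The gap is in the analysis. The inequality you reduce to, $\sum_x \log(\mathrm{hi}(x)-\mathrm{lo}(x)) \in O(\log e(G))$, is false, so neither the interval lemma nor a perturbation can rescue it. Take $G$ to be a path $p_1 \to \dots \to p_\ell$ plus a disjoint chain $x_1 \to \dots \to x_\ell$, with $P$ the first path (a longest path). Every $x_i$ has $\mathrm{lo}(x_i)=0$ and $\mathrm{hi}(x_i)=\ell+1$, so the sum is $\ell\log(\ell+1)$. Meanwhile $e(G)=\binom{2\ell}{\ell}\le 4^\ell$, so the sum is $\Theta(n\log n)$ while $\log e(G)\le n$. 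Here $|Z|=\ell!$, which exceeds $e(G)$ for large $\ell$, so no injection of $Z$ into linear extensions exists. The off-path chain collapses the extension count without shrinking any interval. Your algorithm itself spends only $O(\ell)$ on this instance. What fails is the per-item cost bound $O(1+\log(\mathrm{hi}(x)-\mathrm{lo}(x)))$, which discards the finger.

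The missing idea is to lower-bound $e(G)$ directly by the actual search ranges, with no interval lemma. Let $x_1 <_L \dots <_L x_k$ be the items of $X-P$, and let $q_i$ be the number of path items below $x_i$, with $q_0=0$. Let $s_i=\max(q_{i-1},\mathrm{lo}(x_i))\le q_i$ be where your search for $x_i$ starts; the search costs $O(1+\log(1+q_i-s_i))$.

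Now choose $q_i'\in[s_i,q_i]$ independently for each $i$. The sequence stays non-decreasing, since $q_i'\le q_i\le s_{i+1}\le q_{i+1}'$. Place each $x_i$ right after $p_{q_i'}$ (before $p_1$ if $q_i'=0$), keeping the order of $X-P$ and of $P$. This is a linear extension of $G$:
edges inside $P$ go forward along $P$;
edges inside $X-P$ are respected because that order is unchanged;
an edge $p_j\to x_i$ has $j\le \mathrm{lo}(x_i)\le q_i'$;
an edge $x_i\to p_j$ has $j\ge \mathrm{hi}(x_i)>q_i\ge q_i'$.
Distinct choices give distinct extensions, so $\sum_i\log(1+q_i-s_i)\le\log e(G)$. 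Together with $k\le\log e(G)$ from your first part, the merge costs $O(\log e(G))$.

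Your last paragraph gestures at exactly this idea (``the number of valid interleavings accounts for the gaps''). However, it has to replace the interval-lemma route, not patch it.
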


In \cref{sect:reduction} we use this lemma to derive \cref{theo:sorting_fast} from \cref{theo:sorting_raw}.

\subsection{Existing Results on Adaptive Data Structures}
\label{sect:known_heaps}

In this paper, we mainly focus on adaptive heap data structures.
However,
there is a standard way of using a binary search tree as a heap:
sort the items by the order in which they were pushed,
and maintain the subtree minimum at each node.
Thus, we will also mention some results on binary search trees,
as these results have implications for adaptive heaps.

\paragraph{Existing notions of unified bounds.}
Unified bounds were first defined for binary search trees.
For splay trees, Sleator and Tarjan~\cite{sleator_self-adjusting_1985} use the term
``unified bound'' for the minimum of three other bounds,
the so-called static optimality, static finger, and working set bound.
However, Elmasry, Farzan and Iacono~\cite{elmasry_priority_2012} show that this ``unified bound''
is equivalent to the working set bound. Hence, this definition has been obsoleted.

Iacono~\cite{iacono_alternatives_nodate} introduces the unified bound for binary search trees,
which is strictly stronger than the working set bound for binary search trees. 
He conjectures that the splay tree also satisfies this stronger bound.
This terminology is also used in~\cite{goos_simplified_2004,badoiu_unified_2007}. 
Elmasry, Farzan and Iacono~\cite{elmasry_priority_2012} attempt to define a unified
bound for Heaps,
based on Iacono's~\cite{iacono_alternatives_nodate} unified bound for binary search trees.
They use the term ``unified conjecture'' for this bound.
Their ``unified conjecture'' bound, using our notation from \cref{def:unified},
is $\min\limits_{y \in X:\ b(y) < b(x)} \big(1 + \log |\rank(x) - \rank(y)| + \log (b(x) - b(y)))$, where $\rank(x)$ is the number of items currently in the heap that are smaller than $x$.
Unfortunately, they show that, if a heap supports \texttt{push} in amortized $O(1)$ time,
the running time of $x \gets$\texttt{pop}$()$ cannot be within the ``unified conjecture''.

\cref{def:unified} avoids this obstruction, since we use the push time $a(x)$ instead of the rank, and has an additional nice property: If you use a binary search tree
that has the unified bound (in the sense of Iacono~\cite{iacono_alternatives_nodate}) as a heap,
the resulting heap has the unified bound (in the sense of \cref{def:unified})%
\footnote{Curiously, from this viewpoint, the working set property for heaps~\cite{haeupler2024fast}
does not correspond to the working set property for binary search trees, but rather to a static finger search property for binary search trees
where the finger is the most recently pushed item.}.

\paragraph{Existing heaps fall short of the unified bound.}
None of the existing heap data structures~\cite{iacono_improved_2000,elmasry_priority_2006,elmasry_priority_2012} mentioned in the introduction are known to have the unified bound.
It is conceivable that the smooth heap or (some variant of) the pairing heap
has the unified bound --
we see this as an interesting but difficult open problem.

As discussed above, transforming a binary search tree with the unified bound into a heap
would yield a heap with the unified bound.
Unfortunately, there are no known binary search tree data structures with the unified bound.
The splay tree is conjectured to have the unified bound~\cite{iacono_alternatives_nodate},
but this conjecture remains open.
Resolving the dynamic finger conjecture for the splay tree
took considerable effort~\cite{cole_dynamic_2000}, and the unified bound would imply this conjecture. 
The binary search tree data structure of Derryberry and Sleator~\cite{dehne_skip-splay_2009} has an additive $O(\log \log n)$ overhead on top of the unified bound.
This would raise the running time of \Sort{} to $O(n \log \log n + \log e(G))$.
The dictionaries without this overhead~\cite{iacono_alternatives_nodate,goos_simplified_2004,badoiu_unified_2007} cannot be transformed into a heap, as they are not binary search trees.
Thus, there is no easy way of obtaining a heap with the unified bound.

\paragraph{Comparing the unified bound and the working set bound for heaps.}
For some sequences of \texttt{push} and \texttt{pop} operations,
the unified bound is strictly stronger than the working set bound:
Consider pushing the items $1, n+1, 2, n+2, \dots, n, 2n$ into a heap,
and then popping until the heap is empty.
A heap with the unified bound can perform these operations in $O(n)$ total time,
while the working set bound is only $O(n \log n)$.
Conversely, we now prove \cref{lemm:unified_working_set}: if a heap has the unified bound,
then it automatically has the working set property.

\begin{lemma} \label{lemm:working_set_precise}
    If a heap has the unified bound,
    then the \texttt{push} operation runs in amortized constant time,
    and the $x \gets$\texttt{pop}$()$ operation
    runs in amortized $O(1 + \log(1+t - a(x)))$ time, where $t$
    is the number of \texttt{push} operations performed so far
    and $a(x)$ is as in \cref{def:unified}.
\end{lemma}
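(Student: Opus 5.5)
The plan is to bound the cost of each pop by choosing, in \cref{def:unified}, the finger $y$ to be the item popped \emph{immediately before} $x$. Then $b(x)-b(y)=1$, so the unified cost of popping $x$ is $O(1+\log(1+|a(x)-a(y)|))$. The very first pop already costs $O(1)$ by \cref{def:unified}. The work is to relate $|a(x)-a(y)|$ to the working set quantity $k_x := t-a(x)$, where $t$ is the number of pushes done when $x$ is popped. Write $k_y^{\mathrm{old}} := t_y - a(y)$, where $t_y$ is the number of pushes done when $y$ was popped. Let $p = t - t_y$ be the number of pushes performed between the pop of $y$ and the pop of $x$; no pops happen in between.

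\textbf{Case analysis.} If $a(y)\ge a(x)$, then $|a(x)-a(y)|\le t-a(x)=k_x$, and we are done.

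Suppose instead $a(y)<a(x)$. Then
\[
|a(x)-a(y)| \le t-a(y) = k_y^{\mathrm{old}} + p .
\]
This gives $\log(1+|a(x)-a(y)|) \le \log(1+k_y^{\mathrm{old}}) + \log(1+p) + O(1)$. If $x$ was pushed before $y$ was popped, then the $p$ intervening pushes all happened after $x$ was pushed, so $p\le k_x$. If $x$ was pushed after $y$ was popped, then $p$ may be much larger than $k_x$; in that case we simply keep the term $\log(1+p)\le p$.

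\textbf{Amortization.} In all cases the unified cost of popping $x$ is
\[
O\bigl(1+\log(1+k_x)+\log(1+k_y^{\mathrm{old}})+p_x\bigr),
\]
where $p_x$ is the number of pushes in the gap just before $x$'s pop. We charge the three extra terms as follows.
\begin{itemize}
\item The term $p_x$ is paid for by the pushes in that gap, $O(1)$ each. Each gap precedes exactly one pop, so every push is charged at most once, and pushes remain amortized $O(1)$.
\item The term $\log(1+k_y^{\mathrm{old}})$ is exactly the working set cost that was allotted to $y$ at its own pop. Each item serves as the predecessor of at most one later pop. Hence we let each pop reserve one extra copy of its own working set cost in a potential, and release it at the next pop. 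This only doubles the amortized pop cost.
\end{itemize}
Summing up, the accounting gives amortized $O(1)$ per push and $O(1+\log(1+t-a(x)))$ per pop, as claimed.

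The main obstacle is the case where $x$ is pushed after the previous pop and many pushes occur in between. There no finger $y$ gives a cheap unified bound, and the argument must fall back on charging $\log(1+p)$ to the pushes. I would make sure the potential function bookkeeping (deposit at pop of $y$, withdrawal at pop of $x$, push credits) is formalized cleanly, since the bound must be amortized rather than worst case.
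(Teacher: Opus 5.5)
Your proposal is correct and is essentially the paper's proof. Both take the finger $y$ to be the item popped immediately before $x$, so that $\log(b(x)-b(y))=0$. Both split $|a(x)-a(y)|$ into three terms: $x$'s working-set term, $y$'s working-set term, and the intervening pushes $p$, with $\log(1+p)\le p$. Your case analysis and potential-based charging (pushes pay for $p$, each pop reserves a second copy of its own cost for its successor) are a local, accounting-method version of the paper's single three-term inequality followed by a global sum that telescopes $\sum_i (t_i-t_{i-1})\le t$.
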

Note that $t - a(x)$ is the number of \texttt{push} operations that were performed while $x$ is in the heap, so $O(1 + \log(1+t - a(x)))$ is indeed the working set bound.
\begin{proof}
    Consider an arbitrary sequence $S$ of $t$ push operations and $k$ pop operations.
    Let $x_1, \dots, x_k$ be the elements returned by the \texttt{pop} operations in order,
    and let $t_i$ be the value of $t$ at the time of the $x_i \gets$ \texttt{pop}$()$ operation.
    For all real numbers $\alpha, \beta > 0$, we have $\log(1+\alpha+\beta) \le \log(1+\alpha) + \log(1+\beta)$ and $\log(1+\alpha) \le \alpha$. Therefore,

    \begin{align*}
        \log|1 + a(x_i) - a(x_{i-1})| &\le \log(1+t_i-a(x_i)) + \log(1+t_{i-1} - a(x_{i-1})) + \log(1+t_i - t_{i-1})\\
        &\le \log(1+t_i-a(x_i))  + \log(1+t_{i-1} - a(x_{i-1})) + t_i - t_{i-1}
    \end{align*}
    In the unified bound (\cref{def:unified}), we will use $y = x_{i-1}$ for $x = x_i$.
    By the unified bound, the total cost to execute $S$ is

    \vspace{-1em}
    \begin{align*}
        &\ t + 1 + \sum_{i=2}^{k} \big(1 + \log|a(x_i) - a(x_{i-1})| + \log(b(x_i) - b(x_{i-1}))\big)\\
        \le&\ t + 1 + k + \sum_{i=2}^{k} \big(\log(1+t_i - a(x_i)) + \log(1+t_{i-1} - a(x_{i-1})) \big) + \sum_{i=2}^{k} (t_i - t_{i-1})\\
        \le&\ t + 1 + k + 2 \sum_{i=1}^{k} \log(1+t_i - a(x_i)) + t,
    \end{align*}
    which is the total time to execute $S$ as claimed by \cref{lemm:working_set_precise}.
\end{proof}


\section{Analysing \Sort}

In this section, we show:

\thmSorting*

The only preprocessing step in \Sort{} is computing a topological ordering of $G$.
This takes $O(n + m)$ time using a textbook algorithm.
In sorting phase of \Sort{}, every item in $X$ gets pushed and popped exactly once.
We assign every $x \in X$ two number $a(x), b(x) \in [1, n]$ as in \cref{def:unified}.
Then, we associate to every item $x$ the open interval $(\ell(x), r(x)) \subseteq [0, 2n]$,
where $r(x) = a(x) + b(x)$ and $\ell(x) = \max\Big(0, \max\limits_{a(y) < a(x),\, b(y) < b(x)} (a(y) + b(y) \big)\Big)$.

\begin{lemma} \label{lemm:pop_bound}
    The $x \gets$ \texttt{pop}$()$ operation has an amortized running time of $O(1 + \log(r(x) - \ell(x)))$.
\end{lemma}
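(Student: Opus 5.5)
The plan is to read off the bound directly from \cref{def:unified}, using a well-chosen finger $y$ for each popped item $x$. In the sorting phase of \Sort{}, all $n$ pushes precede all pops, so $a$ and $b$ are permutations of $[n]$ and every item is eventually popped. There are two cases, depending on whether the maximum defining $\ell(x)$ is attained by some item $y$ or $\ell(x)=0$ because no item dominates $x$ in both coordinates.

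\emph{Case 1: some $y$ with $a(y)<a(x)$ and $b(y)<b(x)$ attains $\ell(x)=a(y)+b(y)$.} Use this $y$ as the finger in \cref{def:unified}. Since both differences are positive,
\[
0<a(x)-a(y)\le r(x)-\ell(x) \quad\text{and}\quad 0<b(x)-b(y)\le r(x)-\ell(x).
\]
Hence the cost $1+\log|a(x)-a(y)|+\log(b(x)-b(y))$ is $O(1+\log(r(x)-\ell(x)))$.

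\emph{Case 2: no item $y$ has $a(y)<a(x)$ and $b(y)<b(x)$, so $\ell(x)=0$ and $r(x)-\ell(x)=a(x)+b(x)$.} A per-operation bound is not available here. For instance, if $a(x)=1$, $b(x)=2$, and the only earlier pop is the last pushed item, every admissible finger costs $\log n$. So the bound must be amortized.

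Call such items \emph{records} and list them as $x_1,\dots,x_k$ in increasing order of $b$. Two records cannot be comparable in both coordinates, so $a(x_1)>a(x_2)>\dots>a(x_k)$. The record $x_1$ has $b(x_1)=1$: the first popped item is vacuously undominated, and the record with smallest $b$ must be this item. Its pop is the first pop, which costs $O(1)$.

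For $j\ge 2$, use the finger $y=x_{j-1}$, which satisfies $b(x_{j-1})<b(x_j)$. The two terms of the cost are bounded as follows:
\[
\log\big(a(x_{j-1})-a(x_j)\big)\le \log a(x_{j-1})\le \log r(x_{j-1}),
\]
\[
\log\big(b(x_j)-b(x_{j-1})\big)\le \log b(x_j)\le \log r(x_j).
\]
Both quantities are at least $1$, so the logarithms are nonnegative. Summing over all records, each record $x_i$ is charged at most twice a term of size $O(1+\log r(x_i))=O(1+\log(r(x_i)-\ell(x_i)))$.

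Redistributing these charges is valid in the amortized sense, since the total cost over all pops is what matters. Every pop then has amortized cost $O(1+\log(r(x)-\ell(x)))$: Case 1 items pay their own bound, and records pay a constant multiple of theirs.

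The main obstacle is Case 2. One has to notice that no single finger suffices there and that the records form an antichain, monotone in both $a$ and $b$. This monotonicity is what lets consecutive records serve as fingers with costs chargeable to $r$ of the two endpoints.
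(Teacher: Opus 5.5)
Your proof is correct, and it is more careful than the paper's. Your Case~1 matches the paper: take the maximizer of $\ell(x)$ as the finger and bound each of the two differences by their sum $r(x)-\ell(x)$.

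You diverge in the case $\ell(x)=0$ with $x$ not the first pop. The paper stays per-operation there. It picks an arbitrary earlier-popped $y$ and asserts $\log|a(x)-a(y)|+\log(b(x)-b(y))\le \log a(x)+\log b(x)$. The $b$-term is fine, but the $a$-term is not. When nothing dominates $x$, every $y$ with $b(y)<b(x)$ has $a(y)>a(x)$, and $a(y)-a(x)$ can be as large as $n-a(x)$. Your example with $a(x)=1$, $b(x)=2$ is a concrete counterexample: every admissible finger costs $\Theta(\log n)$ while $r(x)-\ell(x)=3$. So no per-operation argument can work in that case.

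Your staircase argument repairs this and proves the lemma in the genuinely amortized sense. The records form an antichain, so $a$ decreases as $b$ increases. Using $x_{j-1}$ as the finger for $x_j$, the two logarithms are bounded by $\log r(x_{j-1})$ and $\log r(x_j)$. Each record is charged at most twice, and $\ell=0$ on records.

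Each charge goes to an earlier pop, so the bound holds for every prefix of the operation sequence. The only use of the lemma, in the proof of \cref{theo:sorting_raw}, sums over all pops anyway, so nothing downstream is lost. The paper's route would have bounded each individual unified-bound term, but that bound is simply false here.
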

\begin{proof}
    Intuitively, this follows from the fact that the definition of $\ell(x)$ takes the maximum over a smaller set
    compared to the unified bound.
    Formally, let $y_0 = \arg \max_{a(y) < a(x), b(y) < b(x)} (a(y) + b(y))$.
    Suppose first that $y_0$ does not exist (i.e. because there are no valid items $y$). Then $\ell(x) = 0$.
    If $x$ is the very first item popped, then by the unified bound, the pop operation takes $O(1)$ time.
    Otherwise, let $y$ be an arbitrary item popped before $x$. By the unified bound,
    the pop operation that returns $x$ takes $O(\log |a(x) - a(y)| + \log |b(x) - b(y)|)$
    time. Since

    \vspace{-1em}
    \[
        \log |a(x) - a(y)| + \log|b(x) - b(y)| \le \log a(x) + \log b(x) \le 2 \log(a(x) + b(x)) = 2\log (r(x) - \ell(x)).
    \]
    this shows the lemma.
    Suppose now that $y_0$ exists, then $\ell(x) = a(y_0) + b(y_0)$ and also $a(y_0) < a(x)$ and $b(y_0) < b(x)$.
    In the unified bound (\cref{def:unified}),
    we could pick $y = y_0$, hence the pop operation that returns $x$ takes
    $O(\log (a(x) - b(y)) + \log (b(x) - b(y)))$ amortized time. Since

    \vspace{-1em}
    \[
        \log |a(x) - a(y)| + \log |b(x) - b(y)| \le 2 \log (a(x) + b(x) - a(y) - b(y)) = 2 \log (r(x) - \ell(x)),
    \]
    this shows the lemma.
\end{proof}
\newpage

To bound the total running time of all pop operation,
we will eventually apply the interval lemma (\cref{lemm:interval}) to \cref{lemm:pop_bound}.
To this end, we first show:
\begin{lemma} \label{lemm:realization}
    For every $x \in X$, pick a distinct real number $w(x) \in [\ell(x), r(x)]$.
    Then, the sorted order $<_w$ on $X$ induced by $w$ is compatible with $G$.
\end{lemma}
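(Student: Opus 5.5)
The plan is to check compatibility edge by edge: it suffices to show that for every edge $u \to v$ of $G$ we have $w(u) < w(v)$, because a linear order is compatible with $G$ exactly when every edge goes from a smaller to a larger item. I will get this by comparing the push and pop positions of $u$ and $v$. Those positions then feed directly into the definitions of $r(u)$ and $\ell(v)$.

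First, the push order. \Sort{} pushes the items of $X$ in a topological ordering of $G$. For an edge $u \to v$, the item $u$ precedes $v$ in that ordering, so $a(u) < a(v)$.

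Second, the pop order. In the sorting phase all $n$ items are pushed before any pop. Each \texttt{pop} returns the $L$-minimum of the items still in the heap, so the items are popped in increasing order along $L$. Since $L$ is compatible with $G$, the edge gives $u <_L v$, hence $b(u) < b(v)$. This step needs the observation that every item is pushed before the first pop, so that the pop sequence really is the sorted order and not just an order interleaved with pushes. It is easy, but it is the one place where the structure of \Sort{} matters.

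Third, the interval comparison. Since $a(u) < a(v)$ and $b(u) < b(v)$, the item $u$ is one of the candidates $y$ in the maximum that defines $\ell(v)$. Therefore
\[
\ell(v) \;\ge\; a(u) + b(u) \;=\; r(u).
\]
Combined with $w(u) \le r(u)$ and $\ell(v) \le w(v)$, this gives $w(u) \le r(u) \le \ell(v) \le w(v)$. The values $w(\cdot)$ are distinct by assumption, so $w(u) < w(v)$, i.e.\ $u <_w v$. This covers every edge, so $<_w$ is compatible with $G$.

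I do not expect a real obstacle here. The only subtlety is the closed intervals, which allow $w(u) = r(u) = \ell(v) = w(v)$ to touch at an endpoint, and the distinctness hypothesis on $w$ handles exactly this case.
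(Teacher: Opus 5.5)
Your proof is correct and follows the paper's argument: the topological push order gives $a(u)<a(v)$, the pop order along $L$ gives $b(u)<b(v)$, so $u$ is one of the candidates in the maximum defining $\ell(v)$, and hence $r(u)\le \ell(v)$. You then use distinctness to turn $w(u)\le r(u)\le \ell(v)\le w(v)$ into $w(u)<w(v)$, which is slightly more careful than the paper: it writes the strict inequalities $w(y)<r(y)$ and $\ell(x)<w(x)$, which hold only for open intervals, and it has a typo writing $a(x)+b(x)=\ell(x)$ where $r(y)\le\ell(x)$ is meant.
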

\begin{proof}
    Let $y \to x$ be an edge of $G$. In any topological order of $G$, the item $y$ appears before $x$.
    Since \Sort{} pushes the items into the heap in topological order, this shows that $a(y) < a(x)$.
    By assumption, the unknown linear order $L$ on $X$ is compatible with $G$.
    Therefore, $y <_L x$, so $y$ gets popped from the heap before $x$ does.
    This shows that $b(y) < b(x)$.
    But this implies $w(y) < r(y) = a(y) + b(y) \le a(x) + b(x) = \ell(x) < w(x)$,
    i.e. $y <_w x$. This shows the lemma.
\end{proof}

We now conclude the proof of \cref{theo:sorting_raw}:
In the sorting phase, \Sort{} first pushes all items into the heap,
and then pops all items from the heap. The $n$ push operations take $O(n)$ amortized time.
By \cref{lemm:pop_bound}, the running time of all $n$ pop operations is $O(n + \sum_{x \in X} \log (r(x) - \ell(x))$.
By \cref{lemm:interval}, this sum is $O(n + \log |Z|)$,
and by \cref{lemm:realization}, $|Z| \le e(G)$.
Thus, the sorting phase of \Sort{} takes $O(n + \log e(G))$ time,
and hence performs $O(n + \log e(G))$ comparisons.
This shows \cref{theo:sorting_raw}.

\section{Derriving \cref{theo:sorting_fast} from \cref{theo:sorting_raw}}
\label{sect:reduction}

In this section, we use \cref{lemm:long_path} to derive a black-box reduction
that improves the time of the sorting phase from $O(\log e(G) + n)$ to $O(\log e(G))$:

\begin{theorem} \label{lemm:reduction}
    Given $G$, in $O(n+m)$ preprocessing time we can compute
    a subset $Y \subseteq X$ with $|Y| \in O(\log e(G))$
    and a graph $H$ on $Y$ with $O(m)$ edges,
    such that if $L$ is a linear order on $X$ compatible with $G$,
    then the restriction of $L$ to $Y$ is compatible with $H$.
    Moreover, every linear order on $Y$ compatible with $H$ can be obtained in this way,
    thus $e(H) \le e(G)$.
    Then, in the sorting phase, given the sorted order of $Y$,
    we can compute the sorted order of $X$ in $O(\log e(G))$ time.
\end{theorem}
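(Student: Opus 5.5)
The plan is to take $Y = X - P$, where $P$ is a longest path in $G$, and to let $H$ be the graph on $Y$ that records every order constraint between elements of $Y$ implied by $G$. \cref{lemm:long_path} then does most of the work. It gives $|Y| = n - |P| \in O(\log e(G))$, and it gives the final merging step: after $O(n+m)$ preprocessing, knowing the sorted order of $X - P = Y$, we sort all of $X$ in $O(\log e(G))$ time.

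\textbf{Preprocessing.} Compute a topological ordering of $G$ and a longest path $P$ by standard dynamic programming in $O(n+m)$ time. Also run the preprocessing of \cref{lemm:long_path}. Then build $H$ with at most $m + O(n)$ edges as follows.
\begin{enumerate}
\item Keep every edge of $G$ with both endpoints in $Y$.
\item Process vertices in reverse topological order. For each vertex $v$, compute the earliest vertex $\mathrm{first}(v)$ of $P$ reachable from $v$, as the minimum over out-neighbours. This takes $O(n+m)$ time.
\item For each $v \in Y$, compute the latest vertex $\mathrm{last}(v)$ of $P$ that reaches $v$, symmetrically in forward topological order.
\item A constraint $u <_L v$ with $u, v \in Y$ is forced through $P$ exactly when $\mathrm{first}(u) \le_P \mathrm{last}(v)$. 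To encode these constraints with few edges, attach each $u$ to position $\mathrm{first}(u)$ on $P$ and each $v$ to position $\mathrm{last}(v)$.
\item Add a chain of auxiliary dummy vertices along $P$ to realise these constraints, and afterwards contract the dummies.
\end{enumerate}

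Because $H$ must live on $Y$, contracting the dummies is not free. Edges from the $u$'s to the $v$'s whose positions lie later can blow up quadratically, so this step needs care.

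\textbf{Fallback for $H$.} A simpler route is to not make $H$ sparse via reachability at all. Take $H$ to be the subgraph of $G$ induced on $Y$, plus, for each $v \in Y$, edges to and from its nearest $Y$-vertices along ordered classes. The minimal requirement is different, though. Any $L$ compatible with $G$ must restrict to something compatible with $H$. Conversely, every $H$-compatible order on $Y$ must extend to a $G$-compatible order on $X$. The converse is what forces $H$ to contain every constraint of the transitive closure of $G$ restricted to $Y$.

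I would get this with $O(m)$ edges as follows. For each edge $u \to w$ of $G$, let $\mathrm{pred}(u)$ denote $u$ itself if $u \in Y$, and otherwise the latest vertex $y \in Y$ with $y$ reaching $u$ via $P$. Define $\mathrm{succ}(w)$ symmetrically. Then add the edge $\mathrm{pred}(u) \to \mathrm{succ}(w)$. Paths through $P$ are handled by sorting the $Y$-vertices by their $\mathrm{first}$/$\mathrm{last}$ positions and adding one edge per vertex to the next relevant bucket. That costs $O(n)$ extra edges, and $n \le m+1$ for a connected $G$; isolated vertices are handled trivially.

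\textbf{Correctness and the main obstacle.} Once $H$ captures exactly the $Y$-restriction of the closure, both directions follow. Forward: every $H$-edge is implied by a path in $G$. Converse: an order on $Y$ respecting the closure extends to $X$, by inserting the vertices of $P$ greedily, since the closure is a partial order whose restriction to $Y$ is respected. This gives $e(H) \le e(G)$. The main obstacle is exactly this sparse encoding of closure constraints that pass through $P$ using only $O(m)$ edges on $Y$. The idea is the monotone bucket chaining along $P$: link consecutive $Y$-vertices whose $\mathrm{first}$ values are non-decreasing, and connect each $v$ to the head of the bucket at $\mathrm{last}(v)$. I expect its verification to require the most care.
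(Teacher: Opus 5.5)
\textbf{There is a genuine gap, and it lies in the choice $Y = X - P$ itself, not only in checking the bucket chaining.} As you note, the two directions of the statement force the transitive closure of $H$ to equal the closure of $G$ restricted to $Y$. With $Y = X - P$ this can require quadratically many edges.

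Here is a counterexample. Take a path $p_1 \to \dots \to p_k$. Add vertices $u_1, \dots, u_k$ with edges $u_i \to p_1$, and vertices $v_1, \dots, v_k$ with edges $p_k \to v_j$.
\begin{itemize}
\item Then $m = 3k-1$, and every longest path has the form $u_i, p_1, \dots, p_k, v_j$.
\item Also $e(G) = (k!)^2$, so $|X - P| = 2k-2 \in O(\log e(G))$, as \cref{lemm:long_path} promises.
\item On $X - P$, the restricted closure says exactly that every remaining $u_i$ lies below every remaining $v_j$, and nothing else.
\item This poset has height two, so every comparable pair is a cover relation. Hence any $H$ on $X - P$ with this closure must contain all $(k-1)^2$ edges $u_i \to v_j$.
\end{itemize}
So no $H$ with $O(m)$ edges exists, and none can be built in $O(n+m)$ time. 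Your proposed chaining of consecutive $Y$-vertices by their $\mathrm{first}$ values does not fix this. It adds edges $u_i \to u_{i'}$ that $G$ does not imply, which breaks the forward direction (restrictions of $G$-compatible orders must be $H$-compatible).

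\textbf{The missing idea is that you do not need to contract the dummies.} The statement only asks for $|Y| \in O(\log e(G))$, and the final step via \cref{lemm:long_path} only needs $X - P \subseteq Y$. So the vertices of $P$ itself can serve as your hub vertices.

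The paper takes $Y$ to be $X - P$ together with, for each $u \in X - P$:
\begin{itemize}
\item the last vertex of $P$ with an edge into $u$, and
\item the first vertex of $P$ receiving an edge from $u$.
\end{itemize}
This gives $|Y| \le 3|X - P|$. Then $H$ is the subgraph of $G$ induced on $Y$, plus a chain through $Y \cap P$ in path order. It has $m + O(|Y|)$ edges, all implied by $G$.

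For the converse, an $H$-compatible order extends to $X$ by inserting each vertex of $P - Y$ right after its predecessor on $P$. This works because every edge between $P$ and $X - P$ is dominated by one of the chosen edges to or from $Y \cap P$, combined with the chain. In the example above, $Y$ gains only $p_1$ and $p_k$, and the $(k-1)^2$ constraints are carried by about $2k$ edges through these two vertices.
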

\begin{proof}
    We first compute a longest path $P$ in $G$.
    The high-level idea is that we can shortcut $P$ down to $O(n - |P|)$ vertices
    without loosing any information.
    Formally, let the predecessor (or successor) in $P$ of a vertex $u \in X - P$
    be the last (or first) vertex in $P$ that has an edge to (from) $u$.
    We define $Y$ to be $X - P$, plus the predecessor and successor in $P$ of each vertex in $X - P$.
    Therefore, $|Y| \le 3\, |X - P| \in O(\log e(G))$ by \cref{lemm:long_path}.
    We define $H$ to be the graph induced by $G$ on $Y$, plus additional edges to make $|Y \cap P|$
    a path with the same order as $P$.
    By construction, $H$ contains $O(m)$ edges, and can be computed in $O(n+m)$ time.

    Now, if $L$ is a linear order compatible with $G$, then $L$ in particular is compatible with $P$.
    Since every edge in $H$ is either an edge of $H$, or an edge on $P$ in the right direction,
    the restriction of $L$ to $Y$ is compatible with $H$.
    Conversely, given a linear order $L'$ on $Y$ compatible with $H$,
    we create a linear order $L$ on $X$ by placing each item in $P - Y$ into $L'$ right after its predecessor in $P$.
    This ensures that $L$ is compatible with all edges of $G$ that are incident to two vertices of $P$.
    $L$ is also compatible with all edges of $G$ indident to two vertices not on $P$, since these edges are also in $H$.
    Finally, for any edge $x \to y$ in $G$ between a vertex on $P$ and a vertex not in $P$, there is a
    path in $G$ that uses only edges on $P$, or from a predecessor in $P$ or to a succesor in $P$.
    Thus, $L$ is also compatible with these edges, so $L$ is compatible with $G$.
    In particular, this shows $e(H) \le e(G)$.

    During preprocessing, we compute $H$ and perform the preprocessing of \cref{lemm:long_path}.
    All of this takes $O(n+m)$ preprocessing time.
    In the sorting phase, if we are given the sorted order of $Y$, then we also have the sorted order of $X - P$ since $X - P \subseteq Y$.
    Thus, by \cref{lemm:long_path} we can sort $X$ in $O(\log e(G))$ time.
\end{proof}

Using \cref{lemm:reduction}, we show that \cref{theo:sorting_raw} implies \cref{theo:sorting_fast}.
\begin{proof}
    We first check whether $m < n/3$. If so,
    then we ignore $G$ and sort $X$ from scratch with merge sort.
    Thus, we may now assume that $m \in \Omega(n)$.
    Given the DAG $G$ on $X$, we apply \cref{lemm:reduction}
    to obtain a DAG $H$ on a subset $Y \subseteq X$.
    We run \Sort{}$(Y, H)$ to obtain the sorted order on $Y$.
    Finally, we use the sorting phase of \cref{lemm:reduction} to get the sorted order on $X$.

    We first check that this is a correct algorithm:
    Let $L$ be a linear order compatible with $G$ 
    By \cref{lemm:reduction}, the restriction of $L$ to $Y$ is compatible with $H$.
    Thus, $(Y, H)$ is a valid input for \Sort{} and we sort $Y$ correctly.
    Then, \cref{lemm:reduction} computes the sorted order in $X$.

    Next, we analyze the running time:
    If $m < n/3$, there are at least $n/3$ items not incident to any edges in $G$,
    so $e(G) \in \Theta(n \log n)$. We spend $O(m)$ preprocessing time
    and $O(n \log n) = O(\log e(G))$ sorting time in this case.
    Assume now that $m \in \Omega(n)$ and let $n', m'$ denote the number
    of vertices and edges of $H$.
    During preprocessing,
    \cref{lemm:reduction} spends $O(n+m) = O(m)$ time
    and \Sort{} spends $O(n' + m') = O(m)$ time.
    During sorting, \Sort{} spends $O(n' + \log e(H)) = O(\log e(G))$ time,
    and \cref{lemm:reduction} spends $O(\log e(G))$ time.
    Thus, we obtain an algorithm with $O(m)$ preprocessing and $O(\log e(G))$ sorting time.
    This shows \cref{theo:sorting_fast}.
\end{proof}

\section{Heaps with the Unified Bound}

In this section, we show \cref{theo:heap_exists} and design a heap with the unified bound (\cref{def:unified}). We achieve this by maintain a tree where each leaf stores an item,
every node stores its subtree minimum (a pointer to the smallest item in its subtree),
and where the depth of each leaf is bounded by a function that looks like the unified bound.
To this end, consider $n$ leaves numbered $1, \dots, n$
and let $t$ be a time counter that starts at $0$.
Every leaf $i$ has a last access time $s(i) < t$.
We define $d(i) := 1 + \min_{1 \le j \le n} (\log |1+i-j| + \log (1 + t - s(j)))$.
In this section, we show:

\begin{restatable}{theorem}{thmTree} \label{theo:tree_heap}
    There is a data structure that maintains a rooted binary tree
    on top of $n$ leaves numbered $1, \dots, n$ that each store an item,
    such that each node in the tree stores its subtree minimum
    and every leaf $i$ has depth $O(d(i))$.
    The data structure supports the operations:
    \begin{itemize}
        \item \texttt{addLeaf}$(x)$: In amortized $O(1)$ time, increment $n$ and add a new leaf with index $n$, value $x$, and $s(n) = -\infty$.
        \item \texttt{access}$(i)$: In amortized $O(d(i))$ time (computed before the operation), set $s(i) = t$ and then increment $t$. For the very first access operation, $d(i)$ is ill-defined, so we require a bound of amortized $O(1)$ time instead.
        \item \texttt{changeKey}$(i, x)$: In $O(d(i))$ time, replaces the item stored at leaf $i$ with $x$.
    \end{itemize}
    The data structure may remodel the internal tree nodes arbitrarily (within the allowed time bounds).
    The leaves do \emph{not} need to be stored in order.
\end{restatable}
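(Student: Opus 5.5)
I plan to prove \cref{theo:tree_heap} with a hierarchical, bucket-based structure, similar in spirit to the working-set heaps of~\cite{van_der_hoog_simpler_2025-1}, but organized along two axes: the index axis (the term $\log|1+i-j|$) and the recency axis (the term $\log(1+t-s(j))$). Write $c(j) := \log(1+t-s(j))$. It suffices to guarantee that every leaf $i$ lies at depth $O(k)$ whenever some leaf $j$ has $|i-j| < 2^k$ and $t - s(j) < 2^k$. Indeed, $d(i) = \Theta(\min_j \max(\log|1+i-j|, c(j)))$ up to constants, so I only need one exponential scale $k$ per leaf.

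The structure keeps \emph{levels} $k = 0,1,2,\dots$ and a set $F_k$ of ``fingers'' at level $k$, namely the leaves accessed during the last $2^k$ accesses. At most $2^k$ leaves qualify, and the sets are nested, $F_k \subseteq F_{k+1}$. Around each finger $j \in F_k$ I take its window $W_k(j) = [j - 2^k, j + 2^k]$, and I let $U_k = \bigcup_{j \in F_k} W_k(j)$, a union of at most $2^k$ disjoint index intervals. The tree consists of a top spine with nodes $v_0, v_1, \dots$ where $v_k$ holds the minimum over $U_k$. Each level stores the leaves of $U_k \setminus U_{k-1}$ in balanced trees, one per maximal interval, with each such interval cut into pieces of size at most $2^{O(k)}$ so the subtrees have depth $O(k)$. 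These are joined by a balanced tree of depth $O(\log 2^k) = O(k)$ over the at most $2^k$ intervals. I put this tree as the second child of $v_k$. A leaf in $U_k \setminus U_{k-1}$ then has depth $O(k) + k = O(k)$, which is $O(d(i))$. Leaves in no $U_k$, which is possible since there are only finitely many levels, go into a final level $K \approx \log n$, and for those the bound $d(i) \ge \Omega(\log n)$ holds trivially because $c(j) \ge \dots$. I need to check that $|i-j| \ge 2^{K}$ or the access is old for every $j$; I will take $K$ large enough that this holds.

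\textbf{Maintenance.} \texttt{changeKey}$(i,x)$ walks up from $i$ and recomputes subtree minima, which costs exactly its depth, $O(d(i))$. \texttt{addLeaf} appends index $n$. Index $n$ is far from every finger except possibly recent ones, so I put new leaves in a lazy buffer at the last level, organized as a working-set style chain of exponentially growing buckets. This gives amortized $O(1)$ per leaf via the standard binary-counter merging argument.

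\texttt{access}$(i)$ makes $i$ a finger at every level, so $i$ enters $F_0 \subseteq F_1 \subseteq \cdots$. Only levels $k \le O(d(i))$ change structurally, because at higher levels $i$ was already within distance $2^k$ of a finger $j$ with $t - s(j) < 2^k$, so $W_k(i)$ only slightly extends an existing interval. This is the step I expect to be the main obstacle. Accesses also \emph{age} fingers: after $2^k$ further accesses a finger leaves $F_k$, and leaves must move from level $k$ down to level $k+1$. I intend to handle this lazily, rebuilding level $k$ only once every $2^{k-1}$ accesses, with fingers whose age is in $[2^{k-1}, 2^k)$ still counted. Rebuilding level $k$ costs its size, which is not bounded by $2^k$ in general because windows contain $2^{k}$ leaves each, giving up to $4^k$ leaves. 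To fix this I will not store leaves explicitly per window. Instead, each maximal interval is kept as a pointer range into one global balanced structure over indices, namely a finger-search-tree-like decomposition into canonical dyadic blocks of size $2^{k'}$, so a window is $O(k)$ canonical blocks. Rebuilding level $k$ then costs $O(2^k \cdot k)$, and amortized over $2^{k-1}$ accesses this is $O(k)$ per access per level. That sum is too much, so I refine it further: I charge level $k$ only if it actually changed, i.e. some leaf aged out. This happens with frequency about $2^{-k}$ in a potential-function sense, and I expect it to give $O(1)$ per level transition and an $O(d(i))$ total after choosing a potential $\Phi = \sum_k (\text{number of canonical blocks at level } k)$.

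Finally, I would verify the very-first-access $O(1)$ case separately, since the structure is then empty. The depth bound then follows by taking the minimizing $j$ and $k = \lceil \max(\log|1+i-j|, c(j)) \rceil$.
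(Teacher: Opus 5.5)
\textbf{Gap: the dynamic maintenance does not work as proposed.} Your static layout does give the depth bound: a spine $v_0,v_1,\dots$, with level $k$ holding $U_k\setminus U_{k-1}$ at depth $O(k)$. But the content of the theorem is maintaining such a layout in amortized $O(d(i))$ time, and there your plan rests on two false claims.

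First, it is not true that only levels $k\le O(d(i))$ change on \texttt{access}$(i)$. Suppose $d(i)=D$ is witnessed by $j$. Then at every level $k\ge D$ the window $W_k(i)$ sticks out of $W_k(j)$ by $|i-j|$ leaves. Moreover, each $F_k$ gains $i$ and, for distinct accesses, loses the leaf accessed $2^k$ steps earlier. So in general every level changes at every access, and all of this work ends up in your periodic rebuilds.

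Second, a level does not change ``with frequency about $2^{-k}$'': with distinct accessed leaves, some finger leaves every $F_k$ at every step. Consider uniformly random accesses, where $d(i)=\Theta(\log n)$. For $k\le\frac12\log n-O(1)$, the set $U_k\setminus U_{k-1}$ consists of $\Theta(2^k)$ pieces, each needing $\Theta(k)$ canonical dyadic blocks. A constant fraction of these pieces is replaced between two rebuilds of level $k$. That is $\Theta(k)$ work per access at level $k$, hence $\Theta(\log^2 n)$ per access overall. So no potential, in particular not the number of blocks summed over levels, can yield $O(d(i))$.

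The root cause is that you decompose each level's windows into dyadic blocks independently. An access with $d(i)=D$ then gets a fresh window at each of the $\Omega(D)$ levels $k\le D/2-O(1)$, which in general forces $\Theta(D^2)$ new blocks. Your \texttt{addLeaf} also violates the depth bound. If the last leaf was just accessed, the new leaf $n$ has $d(n)=O(1)$ and must sit at constant depth, not in a buffer at level $K\approx\log n$.

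\textbf{What the paper does instead.} Your final refinement, representing windows by canonical blocks of one global dyadic tree, points in the right direction, but the paper never tracks windows or finger ages at all. It keeps a single partition of the leaves into dyadic blocks: the leaders of the complete binary tree $M$. This partition is subject to two conditions.
\begin{itemize}
\item A one-sided recency condition (invariant (4)): the $\lceil\log(1+t-s(i))\rceil$-th ancestor of leaf $i$ is passive.
\item A smoothness condition (invariant (3)): consecutive leaders differ in height by at most $2$.
\end{itemize}
Smoothness alone propagates small blocks to nearby leaves and yields the $\log(1+|i-j|)$ term (\cref{lemm:leader_unified}). Around an accessed leaf, the new leaders have monotone heights with at most two per height (\cref{lemm:monotone}). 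So an access creates only $O(d(i))$ leaders, and the smoothing cascade is paid for by bills.

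Aging is never enforced, because blocks that are too small are always legal. Blocks are merged back only lazily by \texttt{cleanup} when an index tree overflows, paid for by coins released from destroyed leaders. Subtree minima are stored only at active nodes, so splitting a block is free and merging costs $O(1)$. Leaders sit in $O(\log\log n)$ index trees of doubly exponential sizes, and are only ever inserted in bulk as contiguous runs long enough to pay the $O(\log|L_i|)$ cost of repairing minima during split/join; for promotions, invariant (8) guarantees this. Finally, invariant (5) keeps leaf $n$ a leader, so a new leaf can be activated as a height-$0$ leader in $L_0$ in $O(1)$ time.
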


But first, we show how \cref{theo:tree_heap} yields a heaps with the unified bound (in the sense of \cref{def:unified}):
Suppose that \cref{theo:tree_heap} is true and let $T$ be the data structure that maintains a rooted tree.
For every \texttt{push}$(x)$ operation, we call T.\texttt{addLeaf}$(x)$.
For every \texttt{pop}$()$ operation, we look at the subtree minimum of the root of $T$
to determine the leaf $i$ that contains the current minimum item $x$.
We call T.\texttt{access}$(i)$ and them T.\texttt{changeKey}$(i, \infty)$ where $\infty$ is a dummy item with value $\infty$.
Finally, we return $x$.
Observe that, at any point in time, the non-dummy items at the leaves
are precisely the items that have not been popped.
Thus, the subtree minimum of the root of $T$
is the minimum of all these items, and this is a correct implementation \texttt{push} and \texttt{pop}.

As for the running time, the \texttt{push} operation only calls
\texttt{addLeaf}, which takes amortized $O(1)$ time.
To analyze the \texttt{pop} operation, we first observe
that an item $x$ always gets stored in the leaf $i = a(x)$.
Moreover, the only calls to \texttt{access} we perform
is a single call in every \texttt{pop} operation.
Thus, after popping the item $x$, we have $s(i) = b(x)$.
Consider now a call to \texttt{pop} that returns an item $x$ stored at a leaf $i$.
Recall that the unified bound is
$U := 1 + \min_{y:\, b(y) < b(x)} \big(\log |a(x) - a(y)| + \log (b(x) - b(y))\big)$.
Let $y_0$ be the value of $y$ that achieves the minimum and let $j_0$
be the index of the leaf that used to store $y_0$ before it was popped.
Since $b(y_0) < b(y)$, the item $y_0$ was popped before $x$. Thus, at this moment,
we have $s(j_0) = b(y)$. This implies

\vspace{-1em}
\begin{align*}
    d(i) &\le 1 + \log(1+|i-j_0|) + \log(1+t-s(j_0))\\
    &= 1 + \log(1+|a(x) - a(y_0)|) + \log(1+b(x) - b(y_0)) \in O(U).
\end{align*}
Hence, the amortized running time of \texttt{access} during the pop operation
is indeed bounded by the unified bound.
After running \texttt{access}$(i)$, we have $d(i) \in O(1)$,
so the amortized running time of \texttt{changeKey}$(i, \infty)$ is $O(1)$.
This shows that \cref{theo:tree_heap} implies \cref{theo:heap_exists}.
In the remainder of this section, we will prove \cref{theo:tree_heap}.

\subsection{Overview}

At a high level, our data structure consists of two parts.
First, the \emph{main tree}: a complete binary tree with $\Theta(n)$ leaves,
ordered by index.
Some nodes in this tree are \emph{leaders},
where every leaf $i$ lies in the subtree of exactly one leader.
We call this leader the \emph{leader of leaf $i$}.
Second, the \emph{index trees}: we store the leaders in $O(\log \log n)$ balanced binary search trees of doubly exponential sizes.
Each index tree is ordered by the left-to-right order of the leaders in the main tree.
From the main tree and the index trees, we obtain a single tree, as required by \cref{theo:tree_heap}, as follows:
connect the roots of the index trees from big to small,
and attach to each leader node in the index tree its subtree in the main tree.
This yields some nodes with three children, but we can subdivide these nodes to get a binary tree.
(This increases the depth of each leaf by at most a factor $2$.)

In \cref{theo:tree_heap}, leaf $i$ should have depth $O(d(i))$.
Thus, in our construction,
for any leaf $i$ with leader $v$, both the height of the subtree of $v$ in the main tree,
and the height of the index tree that contains $v$ should be bounded bounded by $O(d(i))$.
We will implement the \texttt{access}$(i)$ operation by moving the leaders down in the main tree,
with the goal that leaf $i$ ends up at depth $O(1)$.
Our carefully chosen invariants then guarantee that nearby leaves $j$ also have depth $O(d(j))$.
We implement an additional operation \texttt{cleanup} that moves the leaders up in the main tree,
with the goal of reducing the size of the index trees.
We call \texttt{cleanup}$()$ operation after every \texttt{addLeaf} or \texttt{access} operation.
Finally is easy to implement \texttt{changeKey}: since the values of items have no effect on the shape of our structure,
we just update all subtree minima on the root-to-leaf path,

The main difficulty is maintaining the subtree minima in the index trees as we create and destroy leaders:
while there are balanced binary trees with $O(1)$ insertion or deletion after lookup,
updating the subtree minima will take time linear in the height of the index tree.
To deal with this, we use a charging argument
where we insert $\log S$ consecutive leaders into an index tree at once, where $S$ is the size of the index tree.
Then, updating the subtree minima only takes $O(1)$ time per leader inserted.
We hope that this idea is a crucial step towards the design of a binary search tree with the unified bound.
Indeed, our construction can likely be generalized to achieve this goal.
For this, the main tree would have to be balanced binary tree as well, perhaps a 2,3-tree,
and you would need to find some way of interleaving the index trees into one big tree.
We see this as an interesting open problem.

\subsection{Invariants}

We first describe the \emph{main tree} $M$:
This is a complete binary tree with $m \in [n+1, 2n]$ leaves, numbered $1, 2, \dots, m$ where $m$ is some power of two.
The first $n$ of these leaves store items as in \cref{theo:tree_heap},
while the remaining leaves are there for future \texttt{addLeaf} operations.
Recall the definition of $t, s(i)$ and $d(i)$.
We say the \emph{height} of a node is the height of its subtree in $M$.
(Leaves have height $0$.)
Our goal is to maintain a set of leader nodes so that, for every $i$,
the subtree of the leader of leaf $i$ has height $O(d(i))$.
To this end, every node in $M$ can be \emph{active} or \emph{passive}.
We say a node $u$ is a \emph{leader} if $u$ is active and the parent of $u$ is passive.
We require the following invariants:
\begin{enumerate}[(1)]
    \item \label{inv:1} The leaves $1, \dots, n$ active, and the leaves $n+1, \dots m$, are passive.
    \item \label{inv:2} If a node is active, then its children are active.
    \item \label{inv:3} Consecutive leaders have a height difference of at most two. Formally: Let $u_1, u_2, \dots$ be the leaders in left to right order. Then $|\height(u_j) - \height(u_{j+1})| \le 2$ for every $j$.
    \item \label{inv:4} For every leaf $i$, the $\lceil\log_2(1+t-s(i))\rceil$-th ancestor of $i$ is passive.
    \item \label{inv:5} The parent of leaf $n$ is passive.
\end{enumerate}
We will soon add some additional invariants to guarantee that the number of leaders stays small.
Indeed, with just the invariants \ref{inv:1}-\ref{inv:5}, we could make every leaf its own leader.
However, we first show that these invariants already imply that the depth of each leaf is within the unified bound.
\begin{lemma} \label{lemm:unique_leader}
    Every active node (and each leaf $1, \dots, n$) lies in the subtree of a unique leader.
\end{lemma}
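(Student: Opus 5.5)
Fix an active node $v$, or a leaf $i \in [n]$, which is active by invariant~\ref{inv:1}. We walk up the root-to-$v$ path in the main tree $M$: $v = w_0, w_1, w_2, \dots$, where $w_{k+1}$ is the parent of $w_k$. The plan is to show that the active nodes on this path form a prefix $w_0, \dots, w_k$ with $w_{k+1}$ passive. We then identify the leader as $w_k$ and argue uniqueness separately.

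\textbf{Existence.} By invariant~\ref{inv:2}, if $w_{j+1}$ is active then $w_j$ is active. So the set of active nodes on the path is closed downward, i.e.\ it is a prefix $w_0, \dots, w_k$. Since $w_0$ is active, this prefix is nonempty. We then need some ancestor of $v$ to be passive, and invariant~\ref{inv:5} supplies one. The parent of leaf $n$ is passive, so by the contrapositive of invariant~\ref{inv:2} every ancestor of leaf $n$ is passive; in particular the root of $M$ is passive. (If one prefers, invariant~\ref{inv:1} gives the same conclusion: leaf $m \ge n+1$ is passive, hence the root is passive.) Because the root is passive and $w_0$ is active, the prefix stops strictly below the root. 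Thus $w_k$ is active and its parent $w_{k+1}$ is passive, so $w_k$ is a leader and $v$ lies in its subtree.

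\textbf{Uniqueness.} Any leader whose subtree contains $v$ is an ancestor $w_j$ of $v$ (possibly $v$ itself). A leader is active, so $j \le k$. Its parent $w_{j+1}$ is passive, so $j+1 > k$. Together these force $j = k$.

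The main point needing care is guaranteeing that the root is passive, so that a passive parent exists at all. This is exactly what invariants~\ref{inv:5} and~\ref{inv:2} together provide. A small side point is the case where $v$ is the root: it cannot be active, so this case never occurs.
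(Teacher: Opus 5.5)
Your proof is correct and matches the paper's. Both show that the root of $M$ is passive, then use invariant~\ref{inv:2} to conclude that the active nodes on the root-to-$v$ path form a contiguous segment ending at $v$, whose topmost node is the unique leader; the paper scans top-down for the first active node where you scan bottom-up, and you state the uniqueness step more explicitly. The one difference is that your main route to a passive root uses invariant~\ref{inv:5}, whereas the paper uses your parenthetical alternative (a passive leaf among $n+1,\dots,m$ via invariant~\ref{inv:1}), deliberately keeping invariant~\ref{inv:5} for \texttt{addLeaf}.
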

\begin{proof}
    We first show that the root of $M$ is passive.
    Indeed, since $m \ge (n+1)$, the leaf $n+1$ exists.
    By invariant \ref{inv:1}, this leaf is passive, so by invariant \ref{inv:2}, the root of $M$ is passive too.

    Let $u$ be an arbitrary active node. In particular, by invariant \ref{inv:1}, $u$ could be any leaf of index $1, \dots, n$.
    Let $v$ be the first active node on a root-to-$u$ path.
    By invariant \ref{inv:2}, all nodes after $v$ are active.
    We have seen that $v$ cannot be the root of $M$.
    Thus, by definition, $v$ is the only leader on the root-to-$u$ path.
\end{proof}

Invariants~\ref{inv:3} and~\ref{inv:4} implies that the depth of each leaf is bounded by the unified bound:

\begin{lemma} \label{lemm:leader_unified}
    Let $v$ be the leader of the $i$-th leaf.
    Then $\height(v) \le d(i) + 3$.
\end{lemma}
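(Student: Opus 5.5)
Fix the leaf $i$ and let $j$ be the index achieving the minimum in $d(i)$, so that $d(i) = 1 + \log(1+|i-j|) + \log(1+t-s(j))$. The plan is to walk from the leader $v$ of leaf $j$ to the leader of leaf $i$ in the left-to-right sequence of leaders, and to use invariant~\ref{inv:3} to control how the height changes along the way. Invariant~\ref{inv:4} controls the height of $v$. Leaf $j$ is active, so $j\le n$. Since $v$ is the leader of $j$, every ancestor of $j$ up to $v$ is active, while the $\lceil \log_2(1+t-s(j))\rceil$-th ancestor of $j$ is passive. Hence
\[
  \height(v) \le \lceil\log(1+t-s(j))\rceil - 1 \le \log(1+t-s(j)).
\]
The leaders partition the leaves $1,\dots,n$ into contiguous blocks (\cref{lemm:unique_leader} together with invariant~\ref{inv:2}). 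A leader of height $h$ covers exactly $2^h$ consecutive leaves.

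Let $u_1,\dots,u_k$ be the leaders strictly between $v$ and the leader $w$ of leaf $i$, listed from $v$ toward $w$. Their blocks all lie strictly between $j$ and $i$, so
\[
  \sum_{r=1}^{k} 2^{\height(u_r)} \le |i-j|.
\]
We want to show $\height(w)\le \height(v) + \log(1+|i-j|) + O(1)$. By invariant~\ref{inv:3}, consecutive leaders change height by at most $2$. So if $\height(w) = \height(v) + D$, the intermediate leader heights must climb from about $\height(v)$ to about $\height(w)-2$ in steps of at most $2$. In particular, the last intermediate leader $u_k$ has height at least $\height(w)-2$, so $|i-j|\ge 2^{\height(w)-2}$ whenever $k\ge 1$. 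This gives $\height(w)\le \log|i-j| + 2$, which is at most $d(i)+1$.

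It remains to handle $k=0$, where $w$ and $v$ are adjacent or equal. If $w=v$, then $\height(w)=\height(v)\le d(i)$. If they are adjacent, then $\height(w)\le \height(v)+2 \le d(i)+2$. In every case $\height(w)\le d(i)+3$. Note that the argument never needs the accumulating sum over all intermediate leaders: only the single intermediate leader next to $w$ is used, together with adjacency for the case $k=0$.

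The main obstacle is the bookkeeping of constants: the ceiling in invariant~\ref{inv:4}, the exact meaning of "the $h$-th ancestor", and the $+1$ inside the logarithms of $d(i)$. These are what determine whether the additive constant comes out as $3$. If a case misbehaves, one more unit of slack can be taken from the $1+$ term in $d(i)$ (that is, using $\log(1+|i-j|)\ge 1$ when $i\ne j$). Invariant~\ref{inv:5} is not needed here, since $i,j\le n$ and only active leaves are involved.
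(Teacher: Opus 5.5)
Your proof is correct and is essentially the paper's argument. The paper likewise looks only at the leader $v$ of leaf $i$ and its two neighbouring leaders, and uses invariant~\ref{inv:3} to give these neighbours height at least $\height(v)-2$. Then any leaf $j$ either lies in one of these three blocks, where invariant~\ref{inv:4} makes $\log(1+t-s(j))$ large, or lies at distance at least $2^{\height(v)-2}$ from $i$. You phrase the same dichotomy through leader adjacency for the minimizing $j$ rather than through distance, and your bookkeeping even yields an additive constant smaller than $3$.
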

\begin{proof}
    Let $u, v, w$ be consecutive leaders, that is,
    in left to right order, let $u$ be the leader before $v$,
    and let $w$ be the leader after $v$.
    By invariant \ref{inv:3}, $\height(u) \ge \height(v) - 2$ and $\height(w) \ge \height(v) - 2$.
    In particular, if $|j-i| \le 2^{\height(v) - 2}$, then the $j$-th leaf lies in
    one of the subtrees of $u, v, w$, so by invariant \ref{inv:4}, we have $1+t-s(j) \ge 2^{\height(v) - 3}$.
    For all other $j$, we have $|j - i| \ge 2^{\height(v) - 2}$.
    Therefore, $d(i) = 1 + \min_{1 \le j \le m} (\log |1+i-j| + \log (1 + t - s(j))) \ge \height(v) - 3$.
\end{proof}
The astute reader might notice that we have not used invariant \ref{inv:5} so far.
Indeed, \ref{inv:5} will only be used to support \texttt{addLeaf} in $O(1)$ time,
where it allows us to activate leaf $n+1$ without violating invariants \ref{inv:2} or \ref{inv:4}.

We now describe the index trees $L_0, L_1, \dots$. Each index tree $L_i$ is a balanced binary search tree.
Every leader is stored in exactly one index tree,
and leaders in the same index tree are ordered by their left-to-right order in the main tree $M$.
These index trees have additional invariants:
\begin{enumerate}[(1),start=6]
    \item \label{inv:b}
        Before every \texttt{addLeaf} or \texttt{access} operation, the size of $L_i$ is at most $20 \cdot 4^{2^{i+2}}$.
        After an \texttt{addLeaf} or \texttt{access} operation, the size of $L_i$ is at most $23 \cdot 4^{2^{i+2}}$.
    \item \label{inv:c} All leaders in $L_0$ have a height in $[0, 4]$, and for $i \ge 1$, all leaders in $L_i$ have a height in $[2^i, 4 \cdot 2^i]$.
        (Recall that the height of a leader is its height in $M$, not in $L_i$.)
    \item \label{inv:d} The leaders in $L_i$ of height $(2 \cdot 2^i, 4 \cdot 2^i]$ in $L_i$ are called \emph{unpromoted}.
        Among every maximal sequence $u_\ell, \dots, u_r$ of consecutive leaders in $M$ that all are in $L_i$ and are all unpromoted,
        we require there be at least one leader of height $2 \cdot 2^i+1$ or $2 \cdot 2^{i} + 2$. We call such a sequence $u_\ell, \dots, u_r$ an \emph{unpromoted block}.
\end{enumerate}
Invariant \ref{inv:d} might seem artificial at first, but it will allow us to move many leaders from $L_i$ to $L_{i+1}$ at once.
Indeed, whenever there is a leader of height $4 \cdot 2^i$ in $L_i$,
the unpromoted block in invariant \ref{inv:d} contains $\Omega(2^i)$ consecutive unpromoted leaders in $M$.
By promoting a whole block from $L_i$ to $L_{i+1}$, we will save a bunch of time updating subtree minima.
Invariants \ref{inv:b} and \ref{inv:c} allow us to bound the height of the index trees:

\begin{lemma} \label{lemm:leader_height}
    Let $u$ be a leader that is stored in some index tree $L_i$.
    Then $\height(L_i) \in O(\height(u))$.
\end{lemma}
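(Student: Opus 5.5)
Since each $L_i$ is a balanced binary search tree, $\height(L_i) \in O(\log |L_i| + 1)$. Invariant \ref{inv:b} gives $|L_i| \le 23 \cdot 4^{2^{i+2}}$ at all times, before and after each operation. Hence
\[
\height(L_i) \in O\bigl(1 + \log(23 \cdot 4^{2^{i+2}})\bigr) = O(1 + 2^{i+3}) = O(2^i).
\]
The task is therefore to compare $2^i$ with $\height(u)$, and this is what invariant \ref{inv:c} provides.

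I would split into two cases.

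\emph{Case $i \ge 1$.} Invariant \ref{inv:c} gives $\height(u) \ge 2^i$. Combined with the bound above, $\height(L_i) \in O(2^i) \subseteq O(\height(u))$.

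\emph{Case $i = 0$.} Here $|L_0| \le 23 \cdot 4^4$ is a constant, so $\height(L_0) \in O(1)$. Invariant \ref{inv:c} only gives $\height(u) \in [0,4]$, and a leader of height $0$ (a leaf) would make the claim $O(0)$ degenerate. So in this case I would read the claim as $O(1 + \height(u))$, which is the form actually needed later: the depth of a leaf is bounded by $O(d(i))$, and $d(i) \ge 1$ by definition.

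I do not expect a real obstacle here. The only care needed is this degenerate $i = 0$ case, which must be stated with an additive constant, together with the observation that the size bound in invariant \ref{inv:b} holds at every moment rather than only before operations.
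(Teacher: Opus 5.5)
Your proof is correct and follows the paper's argument. Invariant~\ref{inv:b} bounds $|L_i|$ by $23 \cdot 4^{2^{i+2}}$, giving $\height(L_i) \in O(2^i)$, and invariant~\ref{inv:c} gives $\height(u) \ge 2^i$ for $i \ge 1$. Where the paper handles $i = 0$ only with ``or $i = 0$'', your explicit reading of the bound as $O(1 + \height(u))$ there is the right way to make that step precise; it suffices for the later depth bound because $d(i) \ge 1$.
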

\begin{proof}
    By Invariant \ref{inv:c}, we have $\height(u) \ge 2^i$ or $i = 0$.
    By Invariant \ref{inv:b} and (7), we have $|L_i| \le 23 \cdot 4^{2^{i+2}}$.
    Thus, $\height(L_i) \le 2^{i+2} \log(4) + \log(23) \in O(2^i)$.
\end{proof}

\subsection{Storing the data structure} We use a ``level-linked'' representation of the main tree:
every node stores pointers to its parent and its children, plus pointers to the left and right neighbor of the same height.
Every internal node stores a boolean indicating whether is active or not.
Every leaf $i$ stores its index $i$, its item, and its last access time $s(i)$.
Active internal nodes moreover store the maximum access time in its subtree,
and a pointer to the minimum item in their subtree,
In combination with invariant~\ref{inv:3},
this level-linked representation allows us to go from a leader to the next or previous leader by following $O(1)$ pointers
and checking whether $O(1)$ nodes are active or not.

For the index trees, we use any balanced binary tree that allows us to
insert or delete $k$ (presorted) consecutive items in $O(k + \log S)$ time where $S$ is the size of the balanced binary tree.
We call this a \emph{bulk insertion / deletion}.
If a balanced binary tree $T$ of size $S$ can be built in $O(S)$ time and supports split and join operations in $O(\log S)$,
then we can do a bulk insertion by splitting $T$ into $T_1, T_2$ at the insertion location,
building a tree $T'$ on the $k$ presorted elements, and joining $T_1, T', T_2$.
Similarly, we can perform a bulk deletion by splitting $T$ into $T_1, T', T_2$ and joining $T_1, T_2$.
One example of a balanced binary tree with these properties are red-black trees~\cite{tarjan_data_1983}.

Every node in an index tree stores its subtree minimum, 
and the size of its subtree. We can update these values during a split or join operation in $O(\log S)$ time.
In particular, while maintaining these values, we can still do bulk insertions and bulk deletion in $O(k + \log S)$ time.
For every leader, we store pointers between its node in the main tree,
and its node in the index trees.

\subsection{Amortized Analysis Potential}
We use the accounting method~\cite{cormen01introduction}:
creating a coin or bill takes $O(1)$ amortized time,
you can spend a coin to do $O(1)$ work for free, and spend a bill to gain $O(1)$ coins.
We have the following money:
\begin{itemize}[noitemsep]
    \item two coins per leader node,
    \item one additional coin per unpromoted leader node, and
    \item one bill per pair of consecutive leaders with heigh difference $2$.
\end{itemize}

We think it is easier to keep track of money locally.
But, for readers that prefer working with potential functions, you may think of the potential function 

\vspace{-1em}
\[
    \Phi = (\text{total number of coins}) + 10^9 \cdot (\text{total number of bills}).
\]

\subsection{The \texttt{access} operation}
Recall that a leader is an active node in the main tree with an inactive parent.
We perform \texttt{access} into two steps.
First, we ignore the index trees and only update the main tree.
We deactivate a bunch of nodes, which changes the set of leaders.
Second, we bulk delete the destroyed leaders from the index trees,
and bulk insert the newly created leaders into the index trees.
This approach saves a bunch of time updating subtree minima in the index trees.

To implement \texttt{access}$(i)$, first deactivate all non-leaf nodes on the root-to-leaf path of leaf $i$.
Then,
while there are consecutive leaders $u, v$ with $\height(u) - \height(v) \ge 3$, deactivate
$u$ and the right child of $u$. Similarly, whenever $\height(v) - \height(u) \ge 3$, deactivate $v$ and the left child of $v$.
Finally, erase all destroyed leaders from the index trees,
and insert all newly created leaders into the index trees.
Inserting or deleting leaders one-by-one would be too slow.
Instead, for every $i = 0, 1, \dots$, we build a linked list $N_i$ on all newly created leaders
of height $(2 \cdot 2^i, 4 \cdot 2^i]$ in left-to-right order,
or height $[0, 4]$ for $i=0$.
Then, for each affected index tree $L_i$,
we bulk delete the destroyed leaders from $L_i$,
and bulk insert $N_i$ into $L_i$.
Note that, except for leaders of height $[0,2]$ in $L_0$, all newly created leaders are unpromoted (invariant \ref{inv:d}).

We now analyze this implementation of \texttt{access}$(i)$:

\begin{lemma} \label{lemm:contiguous}
    At any point in time during and after the \texttt{access} operation,
    the set of newly created leaders is \emph{contiguous}, that is,
    it forms a contiguous subsequence of all leaders in the left-to-right order.
\end{lemma}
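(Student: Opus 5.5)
We argue by induction over the individual deactivation steps of \texttt{access}$(i)$. The invariant we carry is: the set $C$ of newly created leaders (leaders at the current moment that were not leaders before the operation began) forms a contiguous block in the left-to-right order of leaders. We also keep track that $C$ is nonempty after the first phase and that every leader outside $C$ is an old leader that has not been destroyed.

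\textbf{First phase.} We deactivate the non-leaf nodes on the root-to-leaf path of leaf $i$. Let $v$ be the old leader of leaf $i$, which exists by \cref{lemm:unique_leader}. Only nodes in the subtree of $v$ change status, so all leaders outside that subtree are unaffected. Inside the subtree of $v$, after deactivating the path from $v$ down to the parent of leaf $i$, the new leaders are exactly the siblings of the path nodes below $v$, together with leaf $i$ itself. Their subtrees partition the leaves of the subtree of $v$. Hence they are precisely the leaders whose subtrees lie in the leaf interval of $v$, and they occupy the slot in the left-to-right order previously held by $v$. So $C$ is contiguous, and it replaces the single destroyed leader $v$.

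\textbf{Second phase.} Each step takes consecutive leaders $u,v$ with $\height(u)-\height(v)\ge 3$ and deactivates $u$ and its right child (the symmetric case is handled the same way). The new leaders created in this step are the left child of $u$ and the two children of the right child of $u$. Their subtrees partition the leaf interval of $u$, so in the left-to-right order they replace $u$ in its slot. The key point is that a violating pair must involve a member of $C$. Before \texttt{access}, invariant \ref{inv:3} held for every consecutive pair. A pair of leaders can only become consecutive, or change heights, through the creation of new leaders, so a pair of two old leaders is consecutive exactly as before and satisfies the height condition. Hence at least one of $u,v$ lies in $C$.

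\begin{itemize}
\item If $u\in C$, then replacing $u$ by the three new leaders keeps $C$ contiguous, since they fill the position of $u$ inside the block.
\item If $u\notin C$, then $v\in C$, and $u$ is adjacent to the block $C$. The three new leaders fill $u$'s slot, which borders $C$, so $C\cup\{\text{new}\}$ is again contiguous.
\end{itemize}

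In both cases, leaders that were created earlier and are now destroyed (when $u\in C$) are removed from the interior of the same slot, which does not break contiguity.

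The main obstacle is making the claim that ``old consecutive pairs satisfy invariant \ref{inv:3}'' precise. Consecutiveness depends on the entire current leader sequence, so we need the observation that two old leaders are adjacent now only if they were adjacent before. This follows because every change replaces a leader by leaders covering exactly the same leaf interval, so leaders never disappear without replacement. Since each slot is refilled, an old--old adjacency cannot newly arise, and the induction goes through.
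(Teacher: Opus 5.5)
Your proof is correct and follows the same route as the paper. After the first phase, the new leaders are exactly the leaders inside the old leader's subtree, so they are contiguous. Afterwards, every violating pair must contain a new leader, because old--old adjacencies and heights are unchanged and satisfied invariant~\ref{inv:3}. Each replacement (by three new leaders, not two as the paper loosely says) therefore stays inside or borders the block.

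Compared with the paper, you make explicit why two old leaders cannot become newly adjacent, which the paper leaves implicit. The only slip is your remark that $C$ is nonempty after the first phase: this fails when leaf $i$ was already its own leader. In that case $C$ stays empty and the statement holds trivially, so nothing breaks.
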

\begin{proof}
    Let $w$ be the leader of leaf $i$ before the \texttt{access} operation.
    First, we deactivate all non-leaf nodes on the path from $w$ to $i$.
    The newly created leaders are precisely the subset of leaders that lie in the subtree of $w$.
    Hence, they are contiguous.
    After that, we always find two consecutive leaders $u, v$ with $|\height(u) - \height(v)| \ge 3$
    and deactivate one of them, say $u$. Note that, since invariant \cref{inv:3} held before the \texttt{access} operation,
    at least one of $u, v$ has to be a newly created leader.
    Deactivating $u$ replaces it with two newly created leaders,
    so the set of newly created leaders remains contiguous.
\end{proof}

\begin{lemma} \label{lemm:monotone}
    At any point in time during and after the \texttt{access}$(i)$ operation,
    the heights of newly created leaders to the left of leaf $i$ is non-increasing,
    and the height of newly created leaders to the right of leaf $i$ is non-decreasing.
    Moreover, at most two consecutive newly created leaders have the same height.
\end{lemma}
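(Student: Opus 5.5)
We argue by induction over the sequence of deactivations performed by \texttt{access}$(i)$, keeping the statement of \cref{lemm:monotone} as an invariant. We also keep one auxiliary invariant about the two ends of the contiguous block of newly created leaders, which is contiguous by \cref{lemm:contiguous}. We read ``non-increasing'' and ``non-decreasing'' in left-to-right order.

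\emph{Auxiliary invariant.} If $p$ is the leftmost (rightmost) newly created leader and $q$ is the old leader immediately to its left (right), then $\height(p) \le \height(q) + 1$.

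\emph{Base case.} Let $w$ be the old leader of leaf $i$. Deactivating the internal nodes on the path from $w$ to $i$ creates the following new leaders: leaf $i$, and the siblings of the deactivated path nodes. Siblings hanging off the left of the path lie to the left of $i$. Going left to right, they sit at strictly decreasing heights, since siblings closer to $i$ hang lower. Symmetrically, the siblings to the right of $i$ have strictly increasing heights. Each height occurs at most once per side. Hence the only possible repeated height is $0$, namely leaf $i$ and its sibling leaf, and there are at most two such leaves. All new leaders have height $\le \height(w)-1$. By invariant~\ref{inv:3}, the old neighbours of $w$ have height $\ge \height(w)-2$, which gives the auxiliary invariant.

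\emph{Inductive step.} Take the case where $u$ is left of $v$ and $\height(u)-\height(v)\ge 3$; the mirrored case is symmetric. Put $h=\height(u)$. Deactivating $u$ and its right child replaces $u$ by three consecutive leaders of heights $h-1, h-2, h-2$, which is itself non-increasing. As in \cref{lemm:contiguous}, at least one of $u, v$ is new. There are three situations.
\begin{enumerate}[(a)]
\item Both $u$ and $v$ are new. Since $u$ is higher than $v$ and newer heights to the right of $i$ are non-decreasing, $u$ must lie to the left of $i$. By the inductive hypothesis, the new leader preceding $u$, if any, has height $\ge h > h-1$. Also $\height(v) \le h-3 < h-2$. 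So the sequence stays non-increasing. The new equal pair $h-2,h-2$ is flanked by strictly different heights, so no three consecutive equal heights arise. The block ends are unchanged, so the auxiliary invariant still holds.
\item $u$ is old and $v$ is the leftmost new leader. The three pieces are prepended to the block; they are all to the left of $i$ and have heights $\ge h-2 > \height(v)$, so the monotonicity and the ``at most two equal'' conditions hold. The old leader $q$ left of $u$ has $\height(q) \ge h-2$ by invariant~\ref{inv:3}. The new leftmost leader has height $h-1 \le \height(q)+1$, so the auxiliary invariant is restored.
\item $u$ is the rightmost new leader and $v$ is old. This is the case that would break monotonicity on the right side, since $u$'s left piece of height $h-1$ could fall below a predecessor of height $h$. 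We rule it out with the auxiliary invariant: it gives $\height(u) \le \height(v)+1$, contradicting $\height(u)-\height(v) \ge 3$.
\end{enumerate}

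The main obstacle is case~(c), that is, excluding violations at the outer boundary of the new block. Everything else is a routine local check. This is exactly why the auxiliary boundary inequality is carried through the induction. One should double-check that new leaders only ever get split into lower pieces, so their heights never increase, and that invariant~\ref{inv:3} among old leaders is available whenever case~(b) is invoked.
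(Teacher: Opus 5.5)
Your proof is correct. It uses the same induction over the deactivation steps as the paper, with the same local check: $u$ is replaced by pieces of heights $h-1,h-2,h-2$, sitting between its predecessor (old, or new of height $\ge h$) and $v$ (height $\le h-3$).

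The real difference is your auxiliary boundary invariant. The paper's proof asserts that, by the inductive hypothesis, leaf $i$ lies weakly to the right of $v$. The lemma's statement alone does not justify this in the configuration of your case (c). There $u$ is the rightmost newly created leader, to the right of $i$, and $v$ is old. Splitting $u$ could then put $u_1$, of height $h-1$, right after a new leader of height $h$, breaking monotonicity on the right side. Your invariant $\height(p)\le\height(q)+1$ at the two ends of the contiguous block is exactly what rules this case out. So your version supplies a step the paper leaves unjustified.

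Two small touch-ups:
\begin{itemize}
\item In case (a), the block end is not always unchanged. If $u$ was the leftmost new leader, the new end is $u_1$. Since $\height(u_1)<\height(u)$, the invariant still holds.
\item You use twice that two currently consecutive old leaders were already consecutive before the access: once for ``at least one of $u,v$ is new'' and once for case (b). This holds because leader subtrees are disjoint leaf intervals covering leaves $1,\dots,n$. A destroyed leader between them would leave a current leader between them.
\end{itemize}
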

\begin{proof}
    This is true after we deactivate all non-leaf nodes on the root-to-leaf path to node $i$.
    We show that \cref{lemm:monotone} is maintained as we deactivate more nodes.
    Let $t, u, v$ be consecutive leaders with $\height(u) - \height(v) \ge 3$.
    By assumption, \cref{lemm:monotone} holds before we deactivate $u$.
    Thus, $t$ is either an old leader, or $\height(t) \ge \height(u)$;
    and leaf $i$ lies weakly to the right of $v$.
    We deactivate $u$ and its right child. This replaces $u$ with three new leaders $u_1, u_2, u_3$
    with $\height(u_1) = \height(u)-1$ and $\height(u_2) = \height(u_3) = \height(u) - 2$.
    Thus, $\height(u_1) > \height(u_2) \ge \height(u_3) > \height(v)$.
    And, $t$ is either an old leader, or $\height(t) > \height(u_1)$.
    Moreover, among $t, u_1, u_2, u_3, v$, only $u_2$ and $u_3$ can be new leaders of the same height.
    This shows that \cref{lemm:monotone} still holds after the deactivations, so \cref{lemm:monotone} is maintained until the end.
\end{proof}

\begin{lemma} \label{lemm:access_correct}
    This implementation of \texttt{access}$(i)$ correctly maintains all invariants.
\end{lemma}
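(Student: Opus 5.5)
We check the invariants one at a time, in the order (1)--(8), using the structure of \texttt{access}$(i)$ together with \cref{lemm:contiguous} and \cref{lemm:monotone}. Invariants \ref{inv:1} and \ref{inv:5} are immediate: \texttt{access} only deactivates non-leaf nodes, and it never activates any node. Invariant \ref{inv:2} holds because every deactivation preserves downward closure of the active set. On the root-to-leaf path we deactivate a whole prefix of ancestors. In the fixing loop we deactivate a leader $u$, whose parent is already passive, and then its child, whose parent $u$ has just become passive. So the active set always remains closed under taking children.

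For invariant \ref{inv:3}, we first argue that the fixing loop terminates, since each iteration strictly decreases the height of some leader, which is a potential bounded below. When the loop stops, no two consecutive leaders differ in height by at least $3$, and that is exactly invariant \ref{inv:3}.

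Invariant \ref{inv:4} is the one that needs genuine care. After the operation, $s(i)=t$ and $t$ has been incremented, so leaf $i$ requires its $\lceil\log_2 2\rceil=1$st ancestor, i.e.\ its parent, to be passive. This holds because we deactivated every non-leaf node on the root-to-$i$ path. For every other leaf $j$, the quantity $1+t-s(j)$ only grows, so the required ancestor moves weakly upward. Since the active set is downward closed and we only deactivate nodes, any ancestor that was passive before stays passive. An ancestor that was required to be passive before is a descendant of (or equal to) the ancestor required now, so by invariant \ref{inv:2} the newly required ancestor is passive as well.

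For the index-tree invariants, \ref{inv:c} and \ref{inv:d} should follow from \cref{lemm:monotone}. Every newly created leader goes into the list $N_i$ that matches its height. Unpromoted newly created leaders in $L_i$ have heights in $(2\cdot 2^i,4\cdot 2^i]$, and by \cref{lemm:contiguous} and \cref{lemm:monotone} they form at most two monotone runs in which heights decrease by at most $2$ per step, since consecutive leaders differ by at most $2$ after the loop. Any maximal unpromoted block containing new leaders therefore either reaches down to height $2\cdot 2^i+1$ or $2\cdot 2^i+2$ before leaving the range, or it merges with an old unpromoted block that already contained such a leader. The boundary case is where the run ends on a leader of a different index tree or an old leader. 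I expect this to be the main obstacle: showing that the step of at most $2$ in height guarantees that the block contains a leader of height exactly $2\cdot2^i+1$ or $2\cdot2^i+2$, and also handling old blocks that were split by the deletion of destroyed leaders.

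For invariant \ref{inv:b}, we use that the number of new leaders in $L_i$ is bounded. By \cref{lemm:monotone}, at most two new leaders share each height. Together with invariant \ref{inv:c}, this means at most $2\cdot 3\cdot 2^i+O(1)$ leaders are added to $L_i$. This is well below $3\cdot 4^{2^{i+2}}$, so the size bound of $23\cdot 4^{2^{i+2}}$ after the operation holds.
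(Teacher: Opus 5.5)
\textbf{Gap: invariant~\ref{inv:d} is not proved.} Your handling of invariants \ref{inv:1}--\ref{inv:5} and \ref{inv:c} is sound. For \ref{inv:4} you even account for $t$ increasing at leaves $j\neq i$, which the paper passes over. The genuine gap is invariant~\ref{inv:d}, which you leave open as ``the main obstacle''. The dichotomy you sketch is not justified as stated. A run of new leaders can leave the range $(2\cdot 2^i,4\cdot 2^i]$ upward without meeting height $2\cdot2^i+1$ or $2\cdot2^i+2$. And an old block it merges with may have lost its anchor to deactivation.

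The missing observation is that leaf $i$ itself becomes a newly created leader of height $0$. (If leaf $i$ was already its own leader, nothing changes.) So from any new leader $u$ with $\height(u)\in(2\cdot2^i,4\cdot2^i]$, walk toward leaf $i$:
\begin{itemize}
\item by \cref{lemm:contiguous}, every leader met is new;
\item by \cref{lemm:monotone}, the heights are non-increasing;
\item by invariant~\ref{inv:3} (already established), they drop by at most $2$ per step, ending at $0$.
\end{itemize}
Hence the walk meets a leader $v$ of height $2\cdot2^i+1$ or $2\cdot2^i+2$. Every leader from $u$ to $v$ is new with height in $(2\cdot2^i,4\cdot2^i]$, so it was inserted into $L_i$ as unpromoted. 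Thus $u$ and $v$ lie in the same block. Your ``boundary case'' never occurs in this direction. This is exactly the paper's argument.

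\textbf{Old blocks cut by destroyed leaders.} Your worry here is legitimate; the paper does not address it explicitly. It closes with one more step. Let $x$ be the old leader immediately left of the destroyed leaders, let $d_1$ be the leftmost destroyed leader, and let $n$ be the leftmost new leader. Every new leader is a proper descendant of a destroyed old leader, these subtrees are disjoint and ordered, and each contains a new leader because leaves are never deactivated. Hence $n$ lies below $d_1$ and $\height(n)<\height(d_1)$. Suppose $x$ is unpromoted in $L_i$.
\begin{itemize}
\item If its old block did not contain $d_1$, that block survives intact, with its anchor.
\item If it did contain $d_1$, then $\height(n)<\height(d_1)\le 4\cdot2^i$. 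Either $\height(n)>2\cdot2^i$: then $n$ is unpromoted in $L_i$, and $x$'s new block contains the $v$ found from $n$ as above. Or $\height(n)\le 2\cdot2^i$: then invariant~\ref{inv:3} forces $\height(x)\in\{2\cdot2^i+1,2\cdot2^i+2\}$, so $x$ is its own anchor.
\end{itemize}
The right side is symmetric.

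\textbf{Minor slip in invariant~\ref{inv:b}.} \cref{lemm:monotone} allows two new leaders per height on each side of leaf $i$, so up to four per height, not two. The slack up to $23\cdot4^{2^{i+2}}$ absorbs this.
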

\begin{proof}
    We never deactivate leaf nodes, so invariant \ref{inv:1} holds.
    First, we deactivate all nodes on a root-to-leaf path. After that we only deactivate leader nodes,
    which by definition have a passive parent. Thus, invariant \ref{inv:2} holds.
    Whenever there are consecutive leaders $u, v$ with $|\height(u) - \height(v)| \ge 2$,
    we deactivate one of them. Thus, invariant \ref{inv:3} holds at the end of \texttt{access}$(i)$.
    Since we do not activate any nodes, invariant \ref{inv:5} is unaffected, and so is invariant \ref{inv:4} for all leaves $j \ne i$.
    Since these held before, they still hold afterwards.
    After \texttt{access}$(i)$, we have $s(i) = t-1$, so that $\lceil \log(1+t-s(i)) \rceil = 1$.
    Since we deactivated the parent of leaf $i$, invariant \ref{inv:4} also holds for leaf $i$.

    By \cref{lemm:monotone}, we create at most four leaders of every height.
    Thus, by invariant \ref{inv:c}, we increase the size of $L_i$ by at most $4 \cdot 4 \cdot 2^{i}$.
    Since $20 \cdot 4^{2^{i+2}} + 4 \cdot 4 \cdot 2^i \le 23 \cdot 4^{2^{i+2}}$, invariant \ref{inv:b} is preserved.
    We insert only leaders of height $\le 4 \cdot 2^i$ into the index tree $L_i$.
    Therefore, invariant \ref{inv:c} holds, and all newly created leaders are unpromoted.
    Let $u$ be a newly created leader, of height $(2 \cdot 2^i, 4 \cdot 2^i]$.
    By invariant \ref{inv:3}, there is a newly created leader $v$ of height $2 \cdot 2^i + 1$ or $2 \cdot 2^i+2$ between $u$ and leaf $i$.
    By \cref{lemm:contiguous,lemm:monotone}, all leaders between $u$ and $v$ have height $(2 \cdot 2^i, 4 \cdot 2^i]$.
    Thus, invariant \ref{inv:d} is maintained. This shows the lemma.
\end{proof}

\begin{lemma} \label{lemm:access_fast}
    The amortized running time of \texttt{access}$(i)$ is $O(d(i))$.
\end{lemma}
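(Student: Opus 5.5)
We split the cost of \texttt{access}$(i)$ into three parts and bound each, using the accounting scheme (coins per leader, an extra coin per unpromoted leader, a bill per pair of consecutive leaders with height difference $2$). The three parts are: (a) deactivating the root-to-leaf path inside the old leader $w$ of leaf $i$; (b) the cascade of deactivations that repairs invariant \ref{inv:3}; and (c) the bulk deletions and insertions in the index trees, including the updates of subtree minima and of the maximum-access-time fields in the main tree.

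For (a), \cref{lemm:leader_unified} gives $\height(w) \le d(i)+3$. Walking down from $w$ and updating $O(1)$ fields per node therefore costs $O(d(i))$. This step creates at most two new leaders per height, so it costs $O(d(i))$ coins and bills to pay for the new money, which is the allowed budget.

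For (b), each deactivation of a leader $u$ with its right child replaces one leader by three new leaders of heights $h-1, h-2, h-2$. This is $O(1)$ work plus the money for two net new leaders. I would pay for it from the bill sitting on the pair with height difference $\ge 3$. More carefully, the cascade only starts because the newly created contiguous block (\cref{lemm:contiguous}) is adjacent to an old leader $x$ whose height differs by at least $3$. By \cref{lemm:monotone} the new leaders are monotone in height with at most two per height, so the cascade at each boundary walks up or down through heights. The key accounting claim is that each deactivation either consumes a bill between two old leaders (a destroyed difference-$2$ pair) or is among $O(d(i))$ deactivations that can be charged directly. A cleaner route is to bound the total number of new leaders by $O(d(i))$, at most four per height (\cref{lemm:monotone}), with heights at most $\height(w)+O(1)$. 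This holds because the cascade never deactivates a leader taller than neighbours within the new block. Hence all deactivations in (b) number $O(d(i))$ and cost $O(d(i))$ actual time plus $O(d(i))$ coins. The bills on destroyed old pairs are simply released, and at most $O(d(i))$ new bills are created.

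For (c), the destroyed leaders are contiguous in each $L_j$, and the new leaders destined for $L_j$ form a contiguous list $N_j$ of length $O(2^j)$. A bulk deletion plus insertion in $L_j$ costs $O(|N_j| + \#\text{deleted} + \log|L_j|)$. By invariant \ref{inv:b}, $\log|L_j| = O(2^j)$, and $L_j$ is touched only if some new leader has height $\ge 2^j$ (or $j=0$). So $\log|L_j|=O(\height) = O(d(i))$ for the largest such $j$, and summing over the $O(\log d(i))$ affected trees gives a geometric sum $O(d(i))$. Deleted leaders pay with their own coins, and new leaders are paid as above. The main obstacle I expect is step (b): showing rigorously that the cascade creates only $O(d(i))$ leaders, or else that the extra deactivations are paid by bills of old height-difference-$2$ pairs, since the cascade can extend past the old leader $w$ into neighbouring old leaders. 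I would first try to show, via \cref{lemm:monotone} and invariant \ref{inv:3} before the operation, that heights in the new block never exceed $\height(w)+2$.
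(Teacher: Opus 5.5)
\textbf{Gap in step (b).} The ``cleaner route'' you propose is false. \texttt{access}$(i)$ can create far more than $O(d(i))$ leaders, with heights far above $\height(w)+O(1)$, because the cascade can climb through many \emph{old} leaders.

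Here is a concrete instance. Invariant \ref{inv:3} allows the old leaders to the left of $w$ to have heights $\height(w)+2, \height(w)+4, \dots, \height(w)+2k$. This staircase is exactly what an earlier \texttt{access}$(j)$ leaves to the left of $j$ when the path below $j$'s old leader alternates direction. Take $\height(w)=2$, let $i$ be the rightmost leaf of $w$, and let $j$ lie two positions to the right of $i$ and have been accessed in the previous step, so $d(i)=O(1)$.
\begin{enumerate}[(i)]
\item Deactivating the path from $w$ to $i$ puts a new leader of height $1$ next to the old leader of height $4$, a difference of $3$.
\item Deactivating that old leader leaves a new leader of height $3$ next to the old leader of height $6$, again a difference of $3$.
\item This repeats up the whole staircase.
\end{enumerate}
So the operation really creates $\Theta(k)$ leaders, of heights up to $2k+1$, with $k$ as large as $\Theta(\log n)$. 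No bound on the actual work can give $O(d(i))$. The cost must be paid from money already stored in the structure, namely the bills on the staircase's old difference-$2$ pairs. Your part (c) rests on the same false bound when it claims every affected index tree has $\log|L_j|=O(d(i))$.

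\textbf{The missing idea.} Generalize the bills so that each consecutive pair $u,v$ holds $\max(0,|\height(u)-\height(v)|-1)$ bills. This agrees with the original accounting whenever invariant \ref{inv:3} holds, so in particular before and after the operation. It also repairs your first instinct in (b): under the original accounting, a pair with difference $\ge 3$ holds no bill at all. With the generalized bills:
\begin{itemize}
\item Deactivating the path from $w$ to $i$ costs $O(\height(w))=O(d(i))$ time, coins and bills. The two boundary pairs have difference at most $\height(w)+2$, and by \cref{lemm:monotone} the bills inside the new block telescope.
\item Each cascade step replaces $u$ by $u_1,u_2,u_3$. The pair $(u,v)$ becomes $(u_3,v)$ and loses two bills. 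The pair formed with $u$'s other neighbour gains at most one. The pairs among $u_1,u_2,u_3$ hold none. So each step frees at least one bill, which pays for its $O(1)$ work and the coins of $u_1,u_2,u_3$.
\item For the index trees, bound the cost by $O(k+h)$, where $k$ is the number of new leaders and $h$ their maximum height. Then $h\le 2k$ by \cref{lemm:contiguous} and invariant \ref{inv:3}, since the new block contains leaf $i$ at height $0$. This is $O(1)$ per new leader, chargeable to the already-paid step that created it; do not try to bound $h$ by $d(i)$.
\end{itemize}
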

\begin{proof}
    We first analyze the time spent updating the main tree and the number of coins created.
    We generalize the notion of bills to larger height differences: A pair $u, v$ of consecutive leaders
    has $\max(0, |\height(u) - \height(v)| - 1)$ bills.
    By \cref{lemm:leader_unified}, the existing leader of leaf $i$ has height $O(d(i))$.
    Thus, deactivating the nodes on the root-to-leaf path of leaf $i$
    takes $O(d(i))$ time and creates $O(d(i))$ bills and coins.
    Consider now three consecutive leaders $t, u, v$ with $\height(u) - \height(v) \ge 3$.
    We deactivate $u$ and the right child of $u$. This creates three new leaders $u_1, u_2, u_3$
    with $\height(u_1) = \height(u)-1$ and $\height(u_2) = \height(u_3) = \height(u) - 2$.
    Overall, the number of bills decreases by at least one. Indeed, the pair $u_3, v$ has two fewer bills
    compared to the pair $u, v$, but the pair $t, u_1$ might have one more bill compared to $t, u$.
    We use this one bill to pay for the $O(1)$ coins of $u_1, u_2, u_3$ and for the $O(1)$ time spent.

    This shows that the time spend updating the main tree and the number of coins created has an amortized cost of $O(d(i))$.
    Next, we show that the time spent updating the index trees is $O(k)$ where $k$ is the number of newly created leaders.
    Then, this time is dominated by the time spent updating the main trees, and the lemma follows.
    Building the linked lists $N_i$ takes constant time per newly created leader.
    To analyze the time spent splitting or joining the index trees,
    let $h$ be the maximum height of any newly created leader.
    On each of the leader trees $L_0, \dots, L_{O(\log h)}$, we perform a bulk deletion and a bulk insertion,
    involving $O(k)$ leaders in total.
    By \cref{lemm:leader_height}, each of these trees has height $O(h)$.
    Moreover, the heights of these trees are bounded by a geometric series.
    Thus, the bulk operations take $O(k + h)$ time in total.
    By invariant \ref{inv:3} and \cref{lemm:contiguous}, there are at least $h/2$ newly created leaders, i.e. $h \in O(k)$. This shows the lemma.
\end{proof}

To summarize, \cref{lemm:access_correct} shows that \texttt{access}$(i)$ is correct,
and \cref{lemm:access_fast} shows that \texttt{access}$(i)$ has the running time requires by \cref{theo:tree_heap}.

\subsection{The \texttt{addLeaf} operation}

The \texttt{addLeaf} operation is easy to implement:
We activate leaf $n+1$ and insert it into the first index tree $L_0$.
Then, we increment $n$.
Finally, if $n = m$, then we double $m$ and increase the height of the main tree by $1$.
All newly created nodes are passive.

Activating leaf $n+1$ makes it a leader.
By invariant \ref{inv:5}, leaf $n$ is also a leader,
so invariant \ref{inv:3} is maintained.
We only add one leader to $L_0$, so invariant \ref{inv:b} is maintained.
It is easy to see that all other invariants are maintained.

As for the running time, leaf $n+1$ becomes a new leader, which creates one coin.
If we do not double $m$, then \texttt{addLeaf} takes $O(1)$ amortized time.
Indeed, by invariants \ref{inv:b} and \ref{inv:c}, the first index tree $L_0$ has constant height,
so inserting the new leader takes constant time.
Whenever we double $m$, we spend $O(m)$ time to create the new nodes.
By the standard amortization argument for dynamic arrays~\cite{cormen01introduction}, this amounts to amortized $O(1)$ time per \texttt{addLeaf} operation.
This shows:

\begin{theorem} \label{theo:addleaf}
    This implementation of 
    \texttt{addLeaf} correctly maintains all invariants
    and has an amortized running time of $O(1)$.
\end{theorem}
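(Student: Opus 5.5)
The statement to prove is \cref{theo:addleaf}. It has two parts, correctness and running time, and I would handle them separately. Throughout, I would use the implementation just described: activate leaf $n+1$, insert it into $L_0$, increment $n$, and if $n=m$ double $m$.

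\textbf{Correctness.} I would go through the invariants one at a time.
\begin{itemize}
\item Invariant \ref{inv:1}: after incrementing $n$, leaves $1,\dots,n$ are active and the rest are passive. Any new leaves created by doubling are passive.
\item Invariant \ref{inv:2}: leaf $n+1$ has no children, so activating it is harmless.
\item Invariant \ref{inv:5}: this invariant said the parent of the old leaf $n$ was passive, so old leaf $n$ was a leader of height $0$. I need the parent of the new last leaf to be passive. An active parent would have to have both children active. Its other child is either leaf $n+2$ (passive) or old leaf $n$. In the latter case the common parent is exactly the parent of old leaf $n$, which is passive by the old invariant.
\item Invariant \ref{inv:3}: the new leader, leaf $n+1$, has height $0$. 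Its left neighbour among the leaders is old leaf $n$, also of height $0$, and it has no right neighbour. So the height differences are fine.
\item Invariant \ref{inv:4}: the new leaf has $s=-\infty$, so its required passive ancestor is at infinite distance, which is vacuous. I will read this as ``no constraint'' for unaccessed leaves. Other leaves are unaffected, since the only node activated is a leaf.
\item Invariants \ref{inv:b} and \ref{inv:c}: we add one leader of height $0$ to $L_0$. This lies in $[0,4]$, and the size bound after the operation has slack $3\cdot 4^{4}$.
\item Invariant \ref{inv:d}: this concerns only unpromoted leaders, which for $L_0$ would have height in $(2,4]$. A height-$0$ leader is never unpromoted, and no unpromoted block is split or changed.
\item Doubling $m$: this adds a new root whose right subtree is entirely passive. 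Heights of existing nodes are unchanged, so the invariants stated in terms of heights and ancestors still hold. The new root is passive, consistent with \cref{lemm:unique_leader}.
\end{itemize}

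\textbf{Running time.} The new leader carries two coins. The pair (old leaf $n$, leaf $n+1$) has height difference $0$, so no bill is created. Inserting into $L_0$ costs $O(\height(L_0))=O(1)$ by invariant \ref{inv:b}, and this includes updating subtree minima and sizes along a constant-height path. Updating the main tree metadata (level links, the active flag, max access time and subtree minimum) is $O(1)$, because only the new leaf becomes active and passive ancestors store no aggregates. Doubling costs $O(m)$ and happens only when $n$ reaches a power of two, so the standard dynamic-array argument gives amortized $O(1)$: charge a constant number of extra coins to each \texttt{addLeaf}, for example two per operation, which covers the $O(m)$ rebuild once the $m/2$ operations since the last doubling have paid in.

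\textbf{Main obstacle.} The delicate step is invariant \ref{inv:5} together with \ref{inv:3}. I must make sure that the new last leaf's parent is passive, and that doubling does not break the level-linked pointers or the requirement $m\ge n+1$ (which holds right after doubling). I would check the cases for the parity of $n+1$ explicitly.
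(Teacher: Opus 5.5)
Your proposal is correct and follows essentially the paper's route. Invariant~\ref{inv:5} makes old leaf $n$ a height-$0$ leader, so invariant~\ref{inv:3} survives; one height-$0$ leader fits in the slack of invariant~\ref{inv:b} for $L_0$; inserting into the constant-height $L_0$ costs $O(1)$; and the $O(m)$ doubling is paid for by the dynamic-array argument. Your invariant-by-invariant check, including that no bill is created, is more thorough than the paper's ``easy to see''. For invariant~\ref{inv:5} you can skip the parity case split: the parent of leaf $n+1$ does not change status, and it already had the passive child $n+1$ before the operation, so it is passive by invariant~\ref{inv:2}.
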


\subsection{The \texttt{cleanup} operation}

We first describe a subroutine \texttt{cleanupStep}$(i)$.
This subroutine reduces the size of $L_i$ back
to $\le 16 \cdot 4^{2^{i+2}}$, in amortized zero time,
but it might increase the size of $L_{i+1}$ in the process.
We achieve this by both moving unpromoted blocks from $L_i$ to $L_{i+1}$,
and by activating as many nodes in the main tree as possible, without violating invariants \ref{inv:4} or \ref{inv:c}.

To perform \texttt{cleanupStep}$(i)$, first build a linked list $A$ of all leaders in $L_0, \dots, L_i$ in left-to-right order.
For every leader of height $4 \cdot 2^i$ in $A$, find its unpromoted block of leaders of height $>2 \cdot 2^i$ in $A$,
remove the block from $A$ and insert it into $L_{i+1}$.
Then, while there are consecutive leaders $t, u, v, w$ where $u, v$ have the same parent $p$,
activate $p$ unless (a) $\height(p) > \min(\height(t), \height(w)) + 1$; (b) this would violate invariant \ref{inv:4} or \ref{inv:5};
or (c) $\height(p) > 4 \cdot 2^i$.
If we do activate $p$, replace $u, v$ by $p$ in the linked list.
Finally, we rebuild the index trees:
For every $0 \le j \le i$, build a balanced binary tree $L_j$
that contains all (remaining) leaders in $A$ of height $(2 \cdot 2^j, 4 \cdot 2^j]$,
or height $[0, 4]$ for $j=0$. This concludes \texttt{cleanupStep}$(i)$.

To efficiently find a parent $p$ to activate during \texttt{cleanupStep}$(i)$,
we maintain a pointer to the leader $t$
and perform a left-to-right scan through $A$. Whenever we fail to activate $p$, we move this pointer one step to the right.
Whenever we successfully activate $p$, we move this pointer two steps to the left.
When the pointer reaches the end of $A$, there are no more parents $p$ to activate.

We now describe the full operation \texttt{cleanup}.
Recall invariant \ref{inv:b}.
The goal of \texttt{cleanup} is to reduce the size of all index trees from up to $23 \cdot 4^{2^{i+2}}$ to at most $20 \cdot 4^{2^{i+2}}$.
To perform \texttt{cleanup}, find the maximum index $i$ with $L_i > 20 \cdot 4^{2^{i+2}}$,
and call \texttt{cleanupStep}$(i)$.
Repeat this until no such index $i$ exists.
A naive implementation of \texttt{cleanup} spends $O(\log \log n)$ time per iteration looking for the index $i$.
We avoid this overhead by keeping all index trees $L_i$ with $|L_i| > 20 \cdot 4^{2^{i+2}}$
in a linked list and updating this linked list whenever we update an index tree during
\texttt{addLeaf}, \texttt{access}, or \texttt{cleanupStep}.
With this bookkeeping, the running time of \texttt{cleanup} is $O(1)$ plus the time spent calling \texttt{cleanupStep}.

\begin{lemma} \label{lemm:post_size}
    After \texttt{cleanupStep}$(i)$, for every $k < 4 \cdot 2^i$,
    there are at most $16 \cdot 4^k$ leaders of height $\le k$.
    And, there are at most $12 \cdot 4^{2^{i+2}}$ leaders of height exactly $4 \cdot 2^i$.
    In particular, $|L_j| \le 16 \cdot 4^{2^{j+2}}$ for every $0 \le j \le i$.
\end{lemma}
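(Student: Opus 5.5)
The proof combines a charging argument on the ``stuck'' configuration left by \texttt{cleanupStep}$(i)$ with a separate counting argument for the top height $H := 4\cdot 2^i$. Throughout, ``leaders'' means the leaders remaining in the list $A$ at the end of \texttt{cleanupStep}$(i)$, since these are exactly the leaders of $L_0,\dots,L_i$ after the rebuild. The final claim then follows from the two counts. For $j<i$, every leader of $L_j$ has height at most $4\cdot 2^j < H$, so $k=4\cdot 2^j$ gives $|L_j|\le 16\cdot 4^{2^{j+2}}$. For $j=i$, take $k=H-1$ and add the height-$H$ count: $16\cdot 4^{H-1}+12\cdot 4^{H}=16\cdot 4^{2^{i+2}}$.

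\emph{Bound for height $\le k<H$.} When the scan terminates, no parent $p$ of two consecutive sibling leaders $u,v$ can be activated. For $\height(p)\le k+1\le H$, reason (c) cannot apply, so each such pair is blocked by (a) or by (b).
\begin{itemize}
\item Call a sibling pair \emph{(b)-blocked} if activating $p$ would violate invariant~\ref{inv:4} or~\ref{inv:5}.
\item Violating invariant~\ref{inv:4} at height $h+1$ means some leaf $j$ below $p$ has $\lceil\log(1+t-s(j))\rceil\le h+1$, i.e.\ $s(j)\ge t+1-2^{h+1}$.
\item Access times are distinct, so at most $2^{h+1}$ leaves qualify. The subtrees of distinct parents at height $h+1$ are disjoint.
\item Hence there are at most $2^{h+1}+1$ (b)-blocked pairs of height $h$; the $+1$ accounts for invariant~\ref{inv:5}.
\end{itemize}
Next I would show that the height profile of the leaders is a union of ``valleys'' whose bottoms are (b)-blocked pairs. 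Consider a local minimum of the left-to-right height sequence. By invariant~\ref{inv:3} and the level-linked complete-tree structure, its sibling is also a leader of the same height, so a local minimum is a sibling pair $u,v$. Its outer neighbours $t,w$ are not lower, so reason (a) fails, and the pair must be (b)-blocked.

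Going outward from a valley bottom, heights only increase. Moreover, I would argue that no height repeats more than a constant number of times (say $4$) before the next increase. The reason is that two equal-height sibling leaders that are flanked by neighbours which are not lower would be activatable by (a), unless they are themselves (b)-blocked. In the latter case the height-$h$ pair is charged to the (b)-blocked count for height $h$ directly.

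Hence each (b)-blocked pair at height $h$ accounts for $O(k-h+1)$ leaders of height $\le k$. The total is
\[
\sum_{h\le k} O\big((2^{h+1}+1)(k-h+1)\big)=O(2^k)\le 16\cdot 4^k ,
\]
after fixing constants; the slack between $2^k$ and $4^k$ is generous. I expect this valley-structure step to be the main obstacle. The delicate cases are leaders adjacent to promoted leaders (of height above $2\cdot 2^i$, removed from $A$) and leaders adjacent to leaders of higher index trees. In both cases the boundary of $A$ acts as a wall that is ``not lower''. So I would treat each maximal gap in $A$ as contributing $O(1)$ extra valley sides per height, and bound the number of such gaps using the moved unpromoted blocks.

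\emph{Bound for height exactly $H$.} Every leader of height $H$ present in $A$ at the start lies in the unpromoted block found for it, and is therefore moved to $L_{i+1}$. So every remaining height-$H$ leader was created by an activation in the scan. Its subtree contains at least two leaders of the initial list $A$, and these sets are disjoint for distinct such leaders. Hence their number is at most $|A_{\mathrm{init}}|/2$. By invariant~\ref{inv:b}, $|A_{\mathrm{init}}|\le \sum_{j\le i}23\cdot 4^{2^{j+2}}\le 24\cdot 4^{2^{i+2}}$, because the sizes of the smaller trees are negligible against $4^{2^{i+2}}$. This yields at most $12\cdot 4^{2^{i+2}}$ leaders of height $H$.
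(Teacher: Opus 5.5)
Your skeleton is the paper's, and the one step you leave open is also open in the paper; your proposed fix for it does not work as stated.

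\textbf{Where you match the paper.} The paper also works with \emph{locked} leaders, meaning leaders whose parent is blocked by invariant~\ref{inv:4} or~\ref{inv:5}. Using distinct access times and disjoint subtrees, it shows there are at most $2^{k+2}$ locked leaders of height $\le k$. It uses the stuck configuration to tie every leader of height $\le k$ to a locked one. It bounds the surviving height-$4\cdot 2^i$ leaders by noting that all original ones are promoted and each new one is formed from two leaders of the initial list, giving at most $12\cdot 4^{2^{i+2}}$. Finally it adds $4\cdot 4^{2^{i+2}}+12\cdot 4^{2^{i+2}}$.

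\textbf{Where you differ: the middle step.} The paper argues locally. If $u$ is unlocked and $\height(u)<4\cdot 2^i$, then some leader within two steps of $u$ is strictly lower; otherwise $u$'s sibling is among those neighbours and the pair would have been activated. Iterating reaches a locked leader within $2\height(u)$ steps, so there are at most $(4k+1)2^{k+2}\le 16\cdot 4^k$ leaders of height $\le k$. This two-step window makes your plateau and repetition bookkeeping unnecessary, as well as the special treatment of an unpaired leaf $n$, which is simply locked. It also yields the exact constant. Your valley charging gives the sharper $O(2^k)$, but you still have to make its constant explicit. That is routine only because the lemma is used with $k\ge 3$, where the slack is large.

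\textbf{The gap issue you flag.} The paper's proof does not address it; it argues as if the consecutive leaders seen by the scan were all leaders. Your remedy, bounding the number of gaps in $A$ via the moved unpromoted blocks, fails for two reasons:
\begin{itemize}
\item Gaps are also created by leaders that already sat in $L_{i+1},L_{i+2},\dots$ before this step. Nothing bounds their number by anything of order $4^k$.
\item Even for the moved blocks alone, your accounting only bounds their number by $O(4^{2^{i+2}}/2^i)$, since each has $\Omega(2^i)$ leaders by invariant~\ref{inv:3}. Charging $\Theta(1)$ leaders per height level in $(2\cdot 2^i,4\cdot 2^i)$ to each gap then adds an $O(4^{2^{i+2}})$ term with an uncontrolled constant. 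That does not fit the budget $16\cdot 4^{4\cdot 2^i-1}=4\cdot 4^{2^{i+2}}$.
\end{itemize}

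\textbf{What you can say about the gaps.} They are harmless for $k\le 2\cdot 2^i$. By construction, every leader outside $A$ has height $>2\cdot 2^i$: promoted blocks are unpromoted leaders of $L_i$, and $L_{i+1},L_{i+2},\dots$ only ever receive leaders above that height. So descending chains that start at height $\le 2\cdot 2^i$ never leave $A$. Likewise, the sibling of such a leader, if it is a leader, lies in $A$. This already gives $|L_j|\le 16\cdot 4^{2^{j+2}}$ for all $j<i$.

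The issue therefore only affects heights in $(2\cdot 2^i,4\cdot 2^i)$, i.e.\ the bound on $|L_i|$ itself. There your argument is no less complete than the paper's, but you should not present the gap count as settled.
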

\begin{proof}
    We say a leader is \emph{locked} if its parent cannot be activated due to invariant \ref{inv:4} or~\ref{inv:5}.
    We first analyze the number of locked nodes: There are at most two locked nodes of height $\le k$
    for every leaf $i$ with $\lceil \log_2(1+t-s(i)) \rceil \le k+1$,
    or equivalently, $t-s(i) \le 2^{k+1} - 1$. Since the $s(i)$ values are pairwise distinct
    and satisfy $s(i) < t$, there are at most $2^{k+1}-1$ such leaves.
    Leaves with a smaller $s(i)$ value
    can only create locked nodes of height $>k$.
    Invariant \ref{inv:5} creates one or two additional locked nodes.
    Thus, for every integer $k$, there are at most $2 \cdot (2^{k+1}-1) + 2 \le 2^{k+2}$ locked nodes of height $\le k$.

    Consider the state of the main tree after running \texttt{cleanupStep}$(i)$.
    Let $s, t, u, v, w$ be consecutive leaders and let $p$ be the parent of $u$.
    Suppose that $u$ is not locked and that $\height(u) < 4 \cdot 2^i$.
    If moreover $\min(\height(s), \height(t), \height(v), \height(w)) \ge \height(u)$,
    then $p$ would get activated during \texttt{cleanupStep}$(i)$. But $u$ was not activated,
    thus, one of the leaders $s, t, v, w$ has height at most $\height(u) - 1$.
    To summarize: for a leader $u$ of height $< 4 \cdot 2^i$ that is not fixed,
    there is a leader of lower height within at most two steps to the left or right.
    By applying this argument inductively, we find a locked leader within at most $2 \height(u)$ steps from $u$.
    Thus, for every $k < 4 \cdot 2^i$,
    all leaders of height $\le k$ are within $2 k$ steps from a locked node of height $\le k$.
    As there are $2^{k+2}$ such locked nodes,
    the total number of leaders of height $\le k$ is bounded by $(4k+1) \cdot 2^{k+2} \le 16 \cdot 4^{k}$.

    Let $1 \le j < i$ be arbitrary. By \ref{inv:c}, the index tree $L_j$ contains only leaders of height $\le 2^{j+2}$.
    Thus, by the first part of the proof, $|L_j| \le 16 \cdot 4^{2^{j+2}}$.
    To analyze the size of $L_i$, we also need to bound the number of leaders of height exactly $4 \cdot 2^i$.
    In the first step of \texttt{cleanup}$(i)$, we promote all leaders of height $4 \cdot 2^i$
    to $L_{i+1}$. Afterwards, we can only create a leader of height $4 \cdot 2^i$
    by destroying two leaders of height $4 \cdot 2^i - 1$.
    By \ref{inv:b}, we started with $\sum_{j=0}^{i} 23 \cdot 4^{2^{j+2}} \le 24 \cdot 4^{2^{i+2}}$ leaders,
    so we created at most
    $12 \cdot 4^{2^{i+2}}$ leaders of height exactly $4 \cdot 2^i$.
    As there are at most $16 \cdot 4^{2^{i+2} - 1} = 4 \cdot 4^{2^{i+2}}$ leaders of height $\le 4 \cdot 2^{i}-1$,
    this shows $|L_i| \le 16 \cdot 4^{2^{i+2}}$.
\end{proof}

\begin{lemma} \label{lemm:cleanup_correct}
    Let $i$ be the maximum index with $|L_i| \ge 20 \cdot 4^{2^{i+2}}$.
    After \texttt{cleanupStep}$(i)$, all invariants hold, with the following refinement for \ref{inv:b}:
    We have $|L_j| \le 20 \cdot 4^{2^{j+2}}$ for all $j \ne i+1$,
    and $|L_{i+1}| \le 23 \cdot 4^{2^{(i+1)+2}}$.
\end{lemma}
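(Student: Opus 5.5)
The plan is to go through the invariants one by one, splitting them into two groups. The first group, \ref{inv:1}--\ref{inv:5}, concerns the main tree. The second group, \ref{inv:b}--\ref{inv:d}, concerns the index trees.

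\textbf{Main-tree invariants.} In \texttt{cleanupStep}$(i)$ the only change to the main tree is activating parents $p$ of two consecutive leaders $u,v$.
\begin{itemize}
\item Invariant \ref{inv:1} is untouched, since leaves are never activated or deactivated.
\item Invariant \ref{inv:2} holds because $u,v$ are active leaders and are the children of $p$.
\item Invariants \ref{inv:4} and \ref{inv:5} are enforced directly by condition (b).
\item For invariant \ref{inv:3}, replacing $u,v$ (both of height $h$) by $p$ (of height $h+1$) only affects the pairs $(t,p)$ and $(p,w)$. Before the step, $|\height(t)-h|\le 2$. Condition (a) gives $h+1 \le \min(\height(t),\height(w))+1$, so $\height(t)\ge h$ and $\height(t)-(h+1)\ge -1$. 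I still need an upper bound $\height(t)\le h+3$, which follows from $|\height(t)-h|\le 2$. The pair $(p,w)$ is symmetric.
\item Promoting blocks and rebuilding index trees does not touch the main tree.
\end{itemize}

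\textbf{Invariant \ref{inv:c}.} Every leader is either placed in $L_{i+1}$ (promoted leaders have height in $(2\cdot 2^i,4\cdot 2^i]\subseteq[2^{i+1},4\cdot 2^{i+1}]$) or placed by height into $L_j$ with $j\le i$. Condition (c) ensures no leader in $A$ exceeds height $4\cdot 2^i$, so every rebuilt tree respects its height range.

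\textbf{Invariant \ref{inv:d}.} This is the delicate point.
\begin{itemize}
\item For $j\le i$, every leader in a rebuilt $L_j$ must count as promoted, or else the invariant must be checked. I would treat the rebuilt leaders as promoted, i.e.\ carrying no unpromoted status, so there is nothing to check there. If the paper's definition forces the status by height, I would instead argue via monotonicity of heights along chains, as in the proof of \cref{lemm:access_correct}.
\item For $L_{i+1}$, each moved block was a maximal unpromoted block in $L_i$. By invariant \ref{inv:d} for $L_i$, and using invariant \ref{inv:3} together with the fact that the block stretches from a height-$4\cdot 2^i$ leader down to a height-$(2\cdot2^i+1\text{ or }+2)$ leader, every promoted block contains leaders of height $2^{i+1}+1$ or $2^{i+1}+2$. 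In $L_{i+1}$ these leaders lie in $[2^{i+1},2\cdot2^{i+1}]$, so they are promoted there.
\item Unpromoted blocks already in $L_{i+1}$ are unaffected by the insertion, since the inserted leaders have height $\le 2\cdot 2^{i+1}$ and therefore split or border blocks without merging them.
\end{itemize}

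\textbf{Size bounds (invariant \ref{inv:b}).}
\begin{itemize}
\item For $j\le i$, \cref{lemm:post_size} gives $|L_j|\le 16\cdot 4^{2^{j+2}}\le 20\cdot 4^{2^{j+2}}$.
\item For $j>i+1$, $L_j$ is untouched and, by maximality of $i$, was already at most $20\cdot4^{2^{j+2}}$.
\item For $L_{i+1}$, it had size at most $20\cdot 4^{2^{i+3}}$ by maximality of $i$. It receives at most the total number of leaders in $L_0,\dots,L_i$, which is at most $\sum_{j\le i}23\cdot 4^{2^{j+2}}\le 24\cdot 4^{2^{i+2}}$. Since $24\cdot 4^{2^{i+2}}\le 3\cdot 4^{2^{i+3}}$, the result is at most $23\cdot 4^{2^{i+3}}$.
\end{itemize}

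\textbf{Expected obstacle.} I expect the main obstacle to be invariant \ref{inv:d}: pinning down exactly which leaders count as unpromoted after rebuilding, and confirming that each moved block contains a leader of height $2^{i+1}+1$ or $2^{i+1}+2$. I would first try to derive this from invariant \ref{inv:3} applied inside each maximal block.
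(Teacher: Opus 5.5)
\textbf{Verdict: there is a genuine gap, at invariant \ref{inv:d} for the rebuilt index trees.}

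Your handling of invariants \ref{inv:1}--\ref{inv:c} and of the size bounds follows the paper and is fine. Your bound for $L_{i+1}$ routes all of $L_0,\dots,L_i$ through it and gives $23$ instead of the paper's $21$, which still suffices. You are also right that every leader moved into $L_{i+1}$ has height at most $2\cdot 2^{i+1}$ and is therefore promoted there, so the unpromoted blocks of $L_{i+1}$ are unchanged.

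\textbf{Why your argument for invariant \ref{inv:d} fails.} The problem is the rebuilt trees $L_0,\dots,L_i$. In the paper, being unpromoted is a function of height: a leader of $L_j$ is unpromoted exactly when its height lies in $(2\cdot 2^j,4\cdot 2^j]$. The rebuild puts exactly the leaders of those heights into $L_j$. So for $j\ge 1$ every leader of the rebuilt $L_j$ is unpromoted, and your first option (declaring them promoted) is not available. The coin accounting in \cref{lemm:cleanup_time} also relies on this height-based status. Your fallback fails as well. \cref{lemm:monotone} is about the contiguous family of leaders freshly created by a single \texttt{access}. After the activation phase of \texttt{cleanupStep}, the leaders of $L_j$ have no such monotone arrangement.

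\textbf{What can be salvaged, and what remains open.} Invariant \ref{inv:3} covers part of the check. If a maximal unpromoted block of the rebuilt $L_j$ has an $M$-neighbour of height at most $2\cdot 2^j$, the adjacent block element must have height $2\cdot 2^j+1$ or $2\cdot 2^j+2$. The missing case is a block whose $M$-neighbours are both taller than $4\cdot 2^j$. Related cases are $j=i$ with neighbours in $L_{i+1}$, and a block that begins at the first leader. The paper's proof does not treat this case either; it only says that placing leaders by height yields invariants \ref{inv:c} and \ref{inv:d}.

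\textbf{A configuration where the invariant seems to fail.} Let $h=4\cdot 2^j$ with $j\ge 1$.
\begin{enumerate}
\item Take two sibling leaders of height $h$ whose parent is locked by invariant \ref{inv:4}, i.e.\ some leaf $\ell$ below it has $\lceil\log_2(1+t-s(\ell))\rceil=h+1$.
\item Let them be flanked by leaders of height $h+1$. The pattern $h+1,h,h,h+1$ respects the alignment in $M$.
\item Invariant \ref{inv:d} can hold before the step, e.g.\ if the siblings sit in $L_{j+1}$ as promoted leaders.
\item The activation phase cannot change the pattern. The parent is locked, and condition (a) prevents the taller neighbours from merging.
\item If $i\ge j+1$, the rebuild places both siblings into $L_j$. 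This gives an unpromoted block with no leader of height $2\cdot 2^j+1$ or $2\cdot 2^j+2$.
\end{enumerate}

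So a complete proof needs an idea beyond both your argument and the paper's one-line claim. One possibility is to weaken invariant \ref{inv:d} so that it exempts blocks adjacent to locked leaders, which are few by the counting in \cref{lemm:post_size}. The amortization in \cref{lemm:cleanup_time} would then need a matching adjustment.
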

\begin{proof}
    Invariant \ref{inv:1} holds since we never activate or deactivate leaves,
    and Invariant \ref{inv:2} holds since we only every activate the parent of two leaders.
    If $t, u, v, w$ are consecutive leaders where $u, v$ have the same parent $p$,
    then we only activate $p$ if $\height(p) \le \min(\height(t), \height(w)) + 1$.
    Thus, invariant \ref{inv:3} holds.
    We explicitly check for invariants~\ref{inv:4} and~\ref{inv:5},
    so these hold too.
    \cref{lemm:post_size} shows that invariant \ref{inv:b} holds for $j \le i$.
    For $j \ge i+2$, we do not change $L_j$ at all, so $|L_j| \le 20 \cdot 4^{2^{j+2}}$ by maximality of $i$.
    We only change $L_{i+1}$, by promoting leaders from $L_i$.
    Thus, $|L_{i+1}| \le 23 \cdot 4^{2^{i+2}} + 20 \cdot 4^{2^{(i+1)+2}} \le 21 \cdot 4^{2^{(i+1)+2}}$ by maximality of $i$. This shows the refinement of invariant \ref{inv:b}.
    When rebuilding the index trees, we place the leaders of height $(2 \cdot 2^j, 4 \cdot 2^j]$
    into $L_j$. Thus, invariants \ref{inv:c} and~\ref{inv:d} hold.
\end{proof}
\begin{lemma} \label{lemm:cleanup_time}
    The amortized running time of \texttt{cleanup}$(i)$ is zero.
\end{lemma}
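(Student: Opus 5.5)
We need to show that the amortized cost of \texttt{cleanupStep}$(i)$ is zero, i.e.\ its actual work is paid by the decrease in potential $\Phi$. We charge each part of the actual work to coins or bills that the step destroys. There are three sources of actual work:
\begin{enumerate}[(a)]
\item building the list $A$ and rebuilding $L_0,\dots,L_i$, which costs $O(|A|)$ since the index trees are built from presorted lists in linear time;
\item the left-to-right activation scan, whose pointer moves are paid for in the same way;
\item the bulk insertions of promoted blocks into $L_{i+1}$, each costing $O(k+\log|L_{i+1}|)=O(k+2^i)$ for a block of size $k$.
\end{enumerate}

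\textbf{Paying for $|A|$.} Since $i$ is chosen with $|L_i|>20\cdot 4^{2^{i+2}}$, we have $|A|=O(|L_i|)$ by the geometric growth of the bounds in invariant~\ref{inv:b}. By \cref{lemm:post_size}, afterwards $\sum_{j\le i}|L_j|\le \sum_{j\le i}16\cdot 4^{2^{j+2}}\le 17\cdot 4^{2^{i+2}}$. Every leader that leaves $L_0,\dots,L_i$ either disappears by merging or is promoted.
\begin{itemize}
\item Each activation of a parent $p$ merges two leaders into one and frees at least two coins.
\item Each promoted leader frees its extra \emph{unpromoted} coin. Leaders in $L_{i+1}$ arriving from $L_i$ have height $(2\cdot2^i,4\cdot2^i]\subseteq[2^{i+1},4\cdot 2^{i+1}]$ and are no longer unpromoted with respect to $L_{i+1}$, so this coin is genuinely released.
\end{itemize}
So at least $|A|-17\cdot4^{2^{i+2}}\ge \Omega(|A|)$ leaders vanish from $L_0..L_i$, and each releases at least one coin. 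After scaling the coin value by a constant, this pays for (a).

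\textbf{Paying for the scan.} Each failed activation advances the pointer by one, and each success moves it back by two but destroys a leader. The total number of pointer moves is therefore $O(|A|+\#\text{activations})$, which is again paid by the coins above. Activations may also change bills: activating $p$ with $\height(p)\le\min(\height(t),\height(w))+1$ can create at most $O(1)$ bill mass on the pairs $(t,p)$ and $(p,w)$. I would check, using invariant~\ref{inv:3}, that the net change in bills is nonpositive, or else pay for it with the two freed coins. Here I would use the convention that a bill is worth much more than a coin only when it is spent, and compare the bill counts before and after.

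\textbf{Paying for the promotions (main obstacle).} The critical point is the additive $\log|L_{i+1}|=O(2^i)$ term per promoted block. Invariant~\ref{inv:d} supplies the needed length. Each block contains a leader of height $4\cdot2^i$ and a leader of height $2\cdot2^i+1$ or $2\cdot2^i+2$, and consecutive leaders differ by at most $2$ in height (invariant~\ref{inv:3}). Hence each block has at least about $2^i/2$ leaders. Their unpromoted coins, released on promotion, pay $\Omega(2^i)$, which covers the split/join cost. A subtlety is that promotion removes the block from $A$ before the rebuild, so the remaining $L_i$ must stay consistent with invariant~\ref{inv:d}. The promoted leaders now lie in $L_{i+1}$ with height at most $2\cdot 2^{i+1}$, so they are not unpromoted there and invariant~\ref{inv:d} for $L_{i+1}$ is untouched. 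Summing (a)--(c), the coins and bills released dominate the actual work, so the amortized cost is at most zero.
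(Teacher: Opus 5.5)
Your decomposition is the paper's: show that no bills are created; get $N-N'\ge\frac{3}{20}N$ from $N>20\cdot4^{2^{i+2}}$ and \cref{lemm:post_size}; collect one released coin per promotion and per activation; and bound all work by $O(N)$, with promoted blocks of $\Omega(2^i)$ leaders absorbing the $O(\log|L_{i+1}|)=O(2^i)$ bulk-insertion overhead.

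The genuine gap is the bill step, which you leave as ``I would check \dots\ or else pay for it with the two freed coins.'' That fallback fails. In \texttt{access}, a single bill must fund the coins of up to three new leaders plus $O(1)$ work, so a bill is worth several coins (hence the factor $10^9$ in $\Phi$). An activation, however, may free only one net coin. Moreover, $\Phi$ charges a bill at full weight whether or not it is ever spent, so no ``convention'' makes a new bill cheap. What you need, and what the paper uses, is that activations create no bills at all, which is a two-line check. Before activating $p$, invariant~\ref{inv:3} gives $\height(t)\le\height(u)+2=\height(p)+1$, and the activation rule gives $\height(p)\le\height(t)+1$. So $|\height(t)-\height(p)|\le 1$, and symmetrically $|\height(p)-\height(w)|\le 1$. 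Hence the new pairs carry no bill, the pairs $(t,u)$ and $(v,w)$ disappear, and invariant~\ref{inv:3} persists for the next activation. Promotions leave the main tree untouched, so they do not affect bills either.

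Some smaller points:
\begin{itemize}
\item An activation frees at least \emph{one} net coin, not two. If $u,v$ have height $2$ in $L_0$ and $p$ has height $3$, you release $4$ coins and pay $3$. This is enough, since you only use one.
\item In (c) you spend the promoted leaders' coins a second time after already counting them in (a). This is harmless only if you argue in aggregate, as the paper does: bulk insertion costs $O(1)$ per promoted leader, so all work is $O(N)$, paid for by the $\ge\frac{3}{20}N$ released coins.
\item Your write-up and the paper's proof both pass over this: rebuilding $L_0,\dots,L_i$ by height turns each surviving formerly promoted leader of $L_j$ (height in $(2^j,2^{j+1}]$) into an unpromoted leader of $L_{j-1}$, which needs a fresh coin. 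All of these have height at most $2^{i+1}<4\cdot 2^i$. By \cref{lemm:post_size} there are at most $16\cdot 4^{2^{i+1}}\le 4^{2^{i+2}}\le\frac13(N-N')$ of them, so the net release is still $\Omega(N)$.
\end{itemize}
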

\begin{proof}
    We first show that \texttt{cleanup}$(i)$ does not create any bills:
    Moving leaders to $L_{i+1}$ has no effect on the number of bills.
    For consecutive leaders $t, u, v, w$ where $u, v$ have the same parent $p$,
    we only activate $p$ if $\height(p) \le \min(\height(t), \height(w)) + 1$.
    Thus, this activation does not create any bills either.

    Next, we analyze the number of coins.
    Let $N$ be the total size of $L_0, \dots, L_i$ at the start of \texttt{cleanup}$(i)$,
    and let $N'$ be the total size of $L_0, \dots, L_i$ at the end of \texttt{cleanup}$(i)$.
    We claim that $N-N' \in \Omega(N)$.
    We have $N \ge 20 \cdot 4^{2^{i+2}}$ since we only run \texttt{cleanup}$(i)$ if $|L_i| \ge 20 \cdot 4^{2^{i+2}}$.
    And, by \cref{lemm:post_size},  we have $N' \le \sum_{j=0}^{i} 16 \cdot 4^{2^{j+2}} \le 17 \cdot 4^{2^{i+2}}$.
    Thus, $N' \le \frac{17}{20} N$, so $N - N' \ge \frac{3}{20} N$.

    For every leader $u$ that we move to $L_{i+1}$, we get one coin since $u$ stops being unpromoted.
    Also, whenever we activate a node $p$ with children $u, v$, we get at least one coin.
    Indeed, we get four coins for destroying the leaders $u, v$, but
    we need two or three coins for the new leader $p$, depending on whether $p$ is unpromoted.
    These are the only two operations that decrease the total size $L_0, \dots, L_i$.
    Thus, we get at least $N - N' \ge \frac{3}{20} N$ coins in total.
    In particular, during \texttt{cleanup}$(i)$, we may spend $O(N)$ time for free.

    When promoting leaders from $A$ to $L_{i+1}$,
    we always promote a block of $\Omega(2^i)$ leaders.
    Bulk inserting these leaders takes $O(\log |L_{i+1}|)$ time, plus constant time per leader.
    By \ref{inv:b}, $\log |L_{i+1}| \in O(2^i)$.
    Thus, we spend constant time per promoted leader, so $O(N)$ time in total.

    Afterwards, we activate some nodes, and rebuild the trees $L_0, \dots, L_i$.
    Since every node in the main tree stores its height and the maximum access time of a leaf in its subtree,
    we can check in $O(1)$ time whether a given parent $p$ can be activated.
    Whenever we do activate $p$, the size of the linked list shrinks by one, so this happens $O(N)$ times.
    We also spend $O(N)$ time scanning through the linked list, moving the pointer to $t$, and unsuccessfully activating nodes.
    Building a balanced binary tree takes linear time in its size.
    Thus, rebuilding $L_0, \dots, L_i$ takes $O(N)$ time.
\end{proof}

\begin{theorem} \label{theo:cleaup}
    \texttt{Cleanup} maintains all invariants, has an amortized running time of $O(1)$,
    and restores the stronger bound in invariant~\ref{inv:b}.
\end{theorem}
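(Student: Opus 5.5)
The plan is to view \texttt{cleanup} as a loop of calls to \texttt{cleanupStep}$(i)$ and to show three things: each call keeps every invariant except a controlled relaxation of invariant~\ref{inv:b}; the loop ends after finitely many calls with the strong bound $|L_j| \le 20 \cdot 4^{2^{j+2}}$ for all $j$; and its total amortized cost is $O(1)$.

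\textbf{Invariants.} Every call is made on the maximum index $i$ with $|L_i| > 20 \cdot 4^{2^{i+2}}$. By \cref{lemm:cleanup_correct}, after the call all invariants \ref{inv:1}--\ref{inv:d} hold. Moreover $|L_j| \le 20\cdot 4^{2^{j+2}}$ for $j \ne i+1$, and $|L_{i+1}| \le 23\cdot 4^{2^{(i+1)+2}}$. So at every moment at most one index tree can exceed the $20$-bound, and it still meets the $23$-bound that invariant~\ref{inv:b} allows between operations. Before the first call, \ref{inv:b} gives the $23$-bound after \texttt{addLeaf} or \texttt{access}; that operation may create violations on several levels, and we always treat the largest violated index. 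Once $L_i$ is fixed, all indices $\ge i+2$ are within the $20$-bound and only $i+1$ can newly violate. Hence the maximum violated index strictly increases from one call to the next, unless it drops because the top violation disappeared. Termination follows from the potential argument below. When no violated index remains, the stronger form of invariant~\ref{inv:b} holds.

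\textbf{Running time.} The bookkeeping linked list of violating trees lets us find the maximum violating index in $O(1)$ per iteration. The list itself is updated in $O(1)$ per index tree changed, and this cost is absorbed in the cost of the operation that changed the tree. By \cref{lemm:cleanup_time}, each \texttt{cleanupStep} has amortized cost zero. The reason is that it frees $\Omega(N)$ coins, where $N \ge 20 \cdot 4^{2^{i+2}}$, and those coins pay for its $O(N)$ actual work, including the $O(1)$ overhead of that iteration of the loop. So the total amortized cost of \texttt{cleanup} is $O(1)$ for the final check plus zero per step.

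\textbf{Termination.} Every step strictly decreases the number of coins, by at least $\tfrac{3}{20}N > 0$, and no step creates bills. So the potential $\Phi$ strictly decreases in every step, and since it is a nonnegative integer quantity, the loop terminates.

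The main obstacle is the interplay between successive steps: a step at level $i$ inflates $L_{i+1}$, which may trigger a step at $i+1$, and so on. I will argue that the refined bound in \cref{lemm:cleanup_correct} and the maximality of the chosen index keep every intermediate state within the $23$-bound. This is exactly the form of invariant~\ref{inv:b} that \cref{lemm:post_size,lemm:cleanup_correct,lemm:cleanup_time} assume when they are applied again, so the chain of calls stays consistent.
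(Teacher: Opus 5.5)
Your proposal is correct and follows the paper's proof. The invariants come from Lemma~\ref{lemm:cleanup_correct}, each \texttt{cleanupStep} has amortized cost zero by Lemma~\ref{lemm:cleanup_time}, and the loop exits only once the stronger bound in invariant~\ref{inv:b} holds. Your explicit termination argument, and your check that every intermediate state stays within the $23$-bound, fill in details the paper leaves implicit. Termination also follows directly, without coin accounting, from your own observation: after the first step only $L_{i+1}$ can exceed the bound, so the step indices strictly increase and are bounded.
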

\begin{proof}
    \texttt{Cleanup} only modifies the data structure via \texttt{cleanupStep}.
    By \cref{lemm:cleanup_correct}, \texttt{cleanupStep} maintains all invariants.
    And \texttt{cleanup} only terminates once the stronger bound in invariant~\ref{inv:b} holds for all index trees. Thus, all invariants are maintained.

    \texttt{Cleanup} spends $O(1)$ time checking whether there are any index trees that are too big,
    in which case it perform a \texttt{cleanupStep}. This is repeated until no \texttt{cleanupStep} happens.
    If \texttt{cleanup} performs at least one \texttt{cleanupStep},
    then the $O(1)$ overhead of \texttt{cleanup} is dominated by the time spent in \texttt{cleanupStep},
    so the amortized running time of \texttt{cleanup} is zero by \cref{lemm:cleanup_time}.
    If no \texttt{cleanupStep} happens, then the running time of \texttt{cleanup} is $O(1)$.
\end{proof}

\subsection{Putting things together}

After every \texttt{addLeaf} or \texttt{access} operation,
we run \texttt{cleanup}.
We show that our construction achieves \cref{theo:tree_heap}:

\thmTree*

\begin{proof}
    By \cref{lemm:leader_unified,lemm:leader_height} every leaf $i$ has depth $O(d(i))$.
    By \cref{theo:addleaf}, \texttt{addleaf} runs in amortized $O(1)$ time.
    By \cref{lemm:access_fast}, \texttt{access} runs in amortized $O(d(i))$ time.
    By \cref{lemm:cleanup_time}, running \texttt{cleanup} after these operation does not increase the amortized running time.
    Since the values of items have no effect on the structure of our heap,
    we only need to update the subtree minima during \texttt{changeKey}.
    This takes worst case $O(d(i))$ time.
    Since we run \texttt{cleanup} after every \texttt{addLeaf} or \texttt{access} operation,
    \cref{lemm:access_correct,theo:addleaf,theo:cleaup} show that we correctly maintain all invariants.
\end{proof}

\bibliographystyle{plain}
\bibliography{refs,zotero}
\end{document}